\newtheorem{theorem}{\textbf{Theorem}}
\newtheorem{definition}{\textbf{Definition}}
\newtheorem{feature}{Feature}
\newtheorem{proposition}{\textbf{Proposition}}
\newtheorem{lemma}{Lemma}
\newtheorem{property}{\textbf{{\em Property}}}
\newcommand{\tth}{t^{\textrm{th}}}
\newcommand{\rmnum}[1]{\romannumeral #1}
\newcommand{\Rmnum}[1]{\expandafter\@slowromancap\romannumeral #1@}
\newcolumntype{C}[1]{>{\centering\arraybackslash}m{#1}}
\begin{document}
%
\title{Algorithms for Scheduling Malleable Tasks}
%
%
%
%

\author{Xiaohu~Wu,
        and~Patrick~Loiseau
\IEEEcompsocitemizethanks{\IEEEcompsocthanksitem Xiaohu Wu is with Fondazione Bruno Kessler, Trento,
Italy.\protect\\
E-mail: xiaohuwu@fbk.eu
\IEEEcompsocthanksitem Patrick Loiseau is with Univ. Grenoble Alpes, LIG, France and MPI-SWS, Germany. E-mail: patrick.loiseau@univ-grenoble-alpes.fr


}
\thanks{Manuscript received April 19, 2005; revised August 26, 2015.}}

%
%

\markboth{Journal of \LaTeX\ Class Files,~Vol.~14, No.~8, August~2015}%
{Shell \MakeLowercase{\textit{et al.}}: Bare Demo of IEEEtran.cls for Computer Society Journals}
%



\IEEEtitleabstractindextext{%
\begin{abstract}
Due to the ubiquity of batch data processing in cloud computing, the related problems of scheduling malleable batch tasks have received significant attention recently. In this paper, we consider a fundamental model where a set of $n$ tasks is to be processed on $C$ identical machines and each task is specified by a value, a workload, a deadline and a parallelism bound. Within the parallelism bound, the number of machines assigned to a task can vary over time without affecting its workload. For this model, we first give two core results: the definition of an optimal state under which multiple machines could be utilized by a set of tasks with hard deadlines, and, an algorithm achieving such a state.
The optimal utilization state plays a key role in the design and analysis of scheduling algorithms
(\rmnum{1}) when several typical objectives are considered, such as social welfare maximization, machine minimization, and minimizing the maximum weighted completion time, and, (\rmnum{2}) when the algorithmic design techniques such as greedy and dynamic programming are applied to the social welfare maximization problem. As a result, we give four new or improved algorithms for the above problems.
\end{abstract}

}

\maketitle

\IEEEdisplaynontitleabstractindextext

%
\IEEEpeerreviewmaketitle

\section{Introduction}
\label{intro}

Cloud computing has become the norm for a wide range of applications and batch processing constitutes the most significant computing paradigm \cite{Hu}. Applications such as web search index update, monte carlo simulations and big-data analytics require executing a new type of parallel tasks on clusters, termed {\em malleable tasks}. Two basic features of malleable tasks are about {\em workload} and {\em parallelism bound}. There are multiple machines, and, throughout the execution, the number of machines assigned to a task can vary over time within the parallelism bound but its workload is not affected by the number of used machines \cite{Jain11a,Jain}. Beyond understanding how to schedule the fundamental batch task model, many efforts are also devoted to its online version \cite{Lucier,Azar,Jain14} and its extension in which each task contains several subtasks with precedence constraints \cite{Bodik,Nagarajan}. In practice, for better efficiency, companies such as IBM have integrated these smarter scheduling algorithms for various time metrics  \cite{Nagarajan} (than the popular dominant resource fairness strategy) into their batch processing platforms \cite{Wolf}.

In scheduling theory, the above malleable task model can be viewed as an extension of the classic model of scheduling preemptive tasks on a single or multiple machines where the parallelism bound is one \cite{Lawler91,Karger}.  
When each task has to be completed by some deadline, the results from the special single machine case have already implied that the state of optimally utilizing machines plays a key role in the design and analysis of scheduling algorithms under several objectives \cite{Karger}. In particular, the famous EDF (Earliest Deadline First) rule can achieve an optimal schedule for the single machine case. It is initially designed so as to find an exact algorithm for scheduling batch tasks to minimize the maximum task lateness (i.e., task's completion time minus due date) \cite{Jac55}. So far, numerous applications of this rule have been found, e.g., (\rmnum{1}) to design exact algorithms for the extended model with release times \cite{Hor74} and for scheduling tasks with deadlines (and release times) to minimize the total weighted number of tardy tasks \cite{Lawler69}, and (\rmnum{2}) as a significant principle in the analysis of scheduling feasibility for real-time systems \cite{Stankovic98}.

Similarly, we are convinced that, as far as malleable tasks are concerned, achieving such an optimal resource utilization state is also very important for designing and analyzing scheduling algorithms (\rmnum{1}) under various objectives, or (\rmnum{2}) when different algorithmic design techniques such as greedy and dynamic programming are applied. The intuition for this is that, if the utilization state was not optimal in an algorithm, its performance could be improved by utilizing the machines optimally to allow more tasks to be completed. All these considerations motivate us to develop an theoretical framework proposed in this paper.



Before this paper, a greedy algorithm was proposed in \cite{Jain} that achieves a performance guarantee $\frac{C-k}{C}\cdot\frac{s-1}{s}$; here, $C$ is the number of machines, $k$ is the maximum parallelism bound of all tasks, $s$ is the minimum slackness of all tasks where each task's slackness is defined to be the ratio of its deadline to its minimum execution time, which is the time when a task is always allocated the maximum number of machines during the execution. $k$ is a system parameter and is assumed to be finite \cite{White}. Intuitively, $s$ characterizes the resource allocation urgency (e.g., $s=1$ means that the maximum amount of machines have to be allocated to a task at every time slot to meet its deadline).

\subsection{Our Results}
\label{sec.contributions}


\vspace{0.22em}\noindent\textbf{Core result} (Section~\ref{sec.optimal}). The core result of this paper is to identify a sufficient and necessary condition under which a set of independent malleable tasks could be all completed by their deadlines on $C$ machines, also referred to as boundary condition in this paper.



In particular, by understanding the basic constraints of malleable tasks, we first identify and formally define a state in which $C$ machines can be said to be optimally utilized by a set of tasks with deadlines in terms of resource utilization. Then, we propose an optimal scheduling algorithm LDF($\mathcal{S}$) (Latest Deadline First) that achieves such an optimal state. The LDF($\mathcal{S}$) algorithm has a polynomial time complexity of $\mathcal{O}(n^{2})$ and is different from the EDF algorithm that gives an optimal schedule in the single-machine case. Here, the maximum deadline of tasks is assumed to be finitely bounded by a constant.

\vspace{0.22em}\noindent\textbf{Applications} (Sections~\ref{more-application} and~\ref{more-app-2}). The above core results have several applications to propose new or improved algorithmic design and analysis for scheduling malleable tasks under different objectives. The scheduling objectives considered in this paper include:
\begin{enumerate}
\item [(a)] \textit{social welfare maximization:} maximize the sum of values of tasks completed by their deadlines;

\item [(b)] \textit{machine minimization:} minimize the number of machines needed to produce a feasible schedule for a set of tasks such that each task is completed by their deadline;

\item [(c)] \textit{maximum weighted completion time minimization:} minimize the maximum weighted completion time of tasks.
\end{enumerate}
Here, the first and second objectives above have been considered in \cite{Jain11a,Jain,Nagarajan}. The second objective that concerns the optimal utilization of machines has been considered for other types of tasks \cite{Chuzhoy04a} but we are the first to consider it for malleable tasks. After applying the core results above, we obtain the following algorithmic results:
\begin{itemize}
\item [(\rmnum{1})] an improved greedy algorithm GreedyRLM with a performance guarantee $\frac{s-1}{s}$ for social welfare maximization with a time complexity of $\mathcal{O}(n^{2})$;
\item [(\rmnum{2})] the first exact dynamic programming algorithm for social welfare maximization with a pseudo-polyno- mial time complexity of $\mathcal{O}(\max\{nd^{L}C^{L},$ $n^{2}\})$, where $L$ is the number of deadlines, $D$ and $d$ are the maximum workload and deadline of tasks;
\item [(\rmnum{3})] the first exact algorithm for machine minimization with a time complexity of $\mathcal{O}(n^{2}, L n \log{n})$;
\item [(\rmnum{4})] a polynomial time (1+$\epsilon$)-approximation algorithm for maximum weighted completion time minimization.
\end{itemize}

In the greedy algorithm of \cite{Jain}, the tasks are considered in the non-decreasing order of their marginal values of tasks (i.e., the ratio of a task's value to its size), and only if a task could be fully completed by its deadline according to the currently remaining resource, it will be accepted and allocated possibly different number of machines over time according to an allocation algorithm; otherwise, it will be rejected. In this paper, we also show that
\begin{itemize}
\item for social welfare maximization, $\frac{s-1}{s}$ is the best possible performance guarantee that a class of greedy algorithms could achieve where they consider tasks in the non-increasing order of their marginal values.
\item as a result, the proposed greedy algorithm of this paper is the best possible among this kind of greedy algorithms.
\end{itemize}

The second algorithm for social welfare maximization can work efficiently when $L$ is small since its time complexity is exponential in $L$. However, this may be reasonable in a machine scheduling context. In scenarios like \cite{Bodik}, tasks are often scheduled periodically, e.g., on an hourly or daily basis, and many tasks have a relatively soft deadline (e.g., finishing after four hours instead of three will not trigger a financial penalty). Then, the scheduler can negotiate with the tasks and select an appropriate set of deadlines $\{\tau_{1}, \tau_{2}, \cdots, \tau_{L}\}$, thereafter rounding the deadline of a task down to the closest $\tau_{i}$ ($1\leq i\leq L$). By reducing $L$, this could permit to use the dynamic programming (DP) algorithm rather than GreedyRLM in the case where the slackness $s$ is close to 1. With $s$ close to 1, the approximation ratio of GreedyRLM approaches 0 and possibly little social welfare is obtained by adopting GreedyRLM while the DP algorithm can still obtain the almost optimal social welfare.

\vspace{0.22em}\noindent\textbf{Technical Difference.} The second algorithm can be viewed as an extension of the pseudo-polynomial time exact algorithm in the single machine case \cite{Lawler91} that is also designed via the generic dynamic programming procedure. However, before our work, how to enable this extension to malleable tasks was not clear as indicated in \cite{Jain11a,Jain}. This is mainly due to the lack of a notion of the optimal state of machines being utilized by malleable tasks with deadlines and the lack of an algorithm that achieves such a state. In contrast, the optimal state in the single machine case can be defined much more easily and achieved by the EDF algorithm. The core results of this paper are the enabler of a DP algorithm.

The way of applying the core results to a greedy algorithm is less obvious since in the single machine case there is no corresponding algorithm to hint its role in the algorithmic design. For the above class of greedy algorithms, we manage to give a new algorithm analysis, figuring out what resource allocation features of tasks can benefit and determine the algorithm's performance. This analysis is an extended analysis of the greedy algorithm for the standard knapsack problem \cite{Brassard} and it does not rely on the dual-fitting technique, on which the algorithm in \cite{Jain} is built.
Here, the problem could be viewed as an extension of the knapsack problem where each item has two additional constraints in a two-dimensional space: a (time) window in which an item could be placed and a maximum width of the space that it could utilize at every moment. Two of the most important algorithms there are either based on the DP technique or of greedy type, that also considers items by their marginal values \cite{Brassard}; we give in this paper their counterparts in the scenario of malleable tasks.

In the construction of the greedy and optimal scheduling algorithms, we are inspired by the algorithm in \cite{Jain}. After our definition of the optimal state and a new analysis of the above class of greedy algorithms, we found that the algorithm in \cite{Jain} could achieve an optimal resource utilization state from the maximum deadline of tasks $d$ to some earlier time slot $t$. However, this is achieved by guaranteeing the existence of a time slot $t^{\prime}$ earlier than $t$ such that the number of available machines at $t^{\prime}$ is $\geq k$, which leads a suboptimal utilization of resources. In our algorithm, we only require $t^{\prime}$ to be such that the number of available machines at $t^{\prime}$ is $\geq 1$, which leads to an optimal resource utilization. More details could be found in the remarks of Section~\ref{sec.scheduling}.

The above third and fourth algorithms are obtained by respectively applying the above core result to a binary search procedure, and the related results in \cite{Nagarajan}.


\subsection{Related works}
\label{sec.related-work}

Now, we introduce the related works.
The linear programming approaches to designing and analyzing algorithms for the task model of this paper \cite{Jain11a,Jain} and its variants \cite{Lucier,Bodik,Azar} have been well studied\footnote{We refer readers to \cite{Karger,Williamson} for more details on the general techniques to design scheduling algorithms.}. All these works consider the same objective of maximizing the social welfare. 
In \cite{Jain11a}, Jain {\em et al.} proposed an algorithm with an approximation ratio of $(1+\frac{C}{C-k})(1+\epsilon)$ via {\em deterministic rounding of linear programming}. Subsequently, Jain {\em et al.} \cite{Jain} proposed a greedy algorithm GreedyRTL and used the {\em dual-fitting technique} to derive an approximation ratio $\frac{C-k}{C}\cdot\frac{s-1}{s}$.
In \cite{Bodik}, Bodik {\em et al.} considered an extension of our task model, i.e., DAG-structured malleable tasks, and, based on {\em randomized rounding of linear programming}, they proposed an algorithm with an expected approximation ratio of $\alpha(\lambda)$ for every $\lambda > 0$, where $\alpha(\lambda)=\frac{1}{\lambda}\cdot e^{-\frac{1}{\lambda}}\cdot \left[ 1-e^{-\frac{(1-1/\lambda)C-k}{2\omega \kappa}\cdot \ln{\lambda\cdot (1-\frac{\kappa}{C})}} \right]$. The online version of our task model is considered in \cite{Lucier,Azar}; again based on the {\em dual-fitting technique}, two weighted greedy algorithms are proposed respectively for non-committed and committed scheduling and achieve the competitive ratios of $cr_{\mathcal{A}}=2+\mathcal{O}(\frac{1}{(\sqrt[3]{s}-1)^{2}})$ where $s>1$ \cite{Jain} and $\frac{cr_{\mathcal{A}}\left( s\cdot\omega(1-\omega) \right)}{\omega(1-\omega)}$ where $\omega\in(0, 1)$ and $s>\frac{1}{\omega(1-\omega)}$.

In addition, Nagarajan {\em et al.} \cite{Nagarajan} considered DAG-structured malleable tasks and propose two algorithms with approximation ratios of 6 and 2 respectively for the objectives of minimizing the total weighted completion time and the maximum weighted lateness of tasks. Nagarajan {\em et al.} showed that {\em optimally scheduling deadline-sensitive malleable tasks in terms of resource utilization is a key to the solutions to scheduling for their objectives.} In particular, seeking a schedule for DAG tasks can be transformed into seeking a schedule for tasks with simpler chain-precedence constraints; then whenever there is a feasible schedule to complete a set of tasks by their deadlines, Nagarajan {\em et al.} proposed a non-optimal algorithm where each task is completed by at most 2 times its deadline and give two procedures to obtain near-optimal completion times of tasks in terms of the above two objectives.

%

Technically, the works \cite{Jain11a,Jain,Lucier,Bodik,Azar} formulate their problem as an Integer Program (IP) and relax the IP to a relaxed linear program (LP). The techniques in \cite{Jain11a,Bodik} require to solve the LP to obtain a fractional optimal solution and then manage to round the fractional solution to an integer solution of the IP that corresponds to an approximate solution to their original problem. In \cite{Jain,Lucier,Azar}, the dual fitting technique first finds the dual of the LP and then construct a feasible algorithmic solution $X$ to the dual in some greedy way. This solution corresponds to a feasible solution $Y$ to their original problems, and, due to the weak duality, the value of the dual under the solution $X$ (expressed in the form of the value under $Y$ multiplied by a parameter $\alpha\geq 1$) will be an upper bound of the optimal value of the IP, i.e., the optimal value that can be achieved in the original problem. Therefore, the approximation ratio of the algorithm involved in the dual becomes clearly  $1/\alpha$. Here, the approximation ratio is a lower bound of the ratio of the actual value obtained by the algorithm to the optimal value.


A part of results of this paper appeared at the Allerton conference in the year 2015 \cite{Wu15a,Wu15b}. Following \cite{Wu15a,Wu15b}, a recent work also gave a similar (sufficient and necessary) feasibility condition to determine whether a set of malleable tasks could be completed by their deadlines and showed that such a condition is central to the application of the LP technique to the three problems of this paper: greedy and exact algorithms for social welfare maximization and an exact algorithm for machine minimization. Guo \& Shen first used the LP technique to give a new proof of this feasibility condition in the core result. Based on this condition, the authors gave a new formulation of the original problems as IP programs, different from the ones in \cite{Jain11a,Jain}. This new formulation enables from a different perspective proposing almost the same algorithmic results as this paper, e.g., for the machine minimization problem an exact algorithm with a time complexity $\mathcal{O}((n+d)^{3.5}L_{s}(\log{n}+\log{k}))$, and for the social welfare maximization problem an exact algorithm with a complexity $\mathcal{O}(n\cdot (C\cdot d)^{d})$, where $L_{s}$ is the length of the LP's input. In addition, we have shown that the best performance guarantee is $\frac{s-1}{s}$ when a greedy algorithm considers tasks in the non-increasing order of their marginal values. Guo \& Shen also considered another standard to determine the order of tasks, and proposed a greedy algorithm with a performance guarantee $\frac{C-k}{C}$ and a complexity $\mathcal{O}(n^{2}+nd)$.

\section{Model and Problem Description}
\label{sec.model}

\begin{table}
\centering
\begin{threeparttable}[!ht]

\caption{Main Notation}
\begin{tabular}{|C{1.5cm}|C{6.2cm}|}
\hline

   Notation & Explanation\\ \hline

$C$ & the total number of machines \\ \hline

$\mathcal{T}$ & a set of tasks to be scheduled on $C$ machines \\ \hline

$T_{i}$ & a task in $\mathcal{T}$ \\ \hline

$D_{i}, d_{i}, v_{i}$ & the workload, deadline, and value of a task $T_{i}$ \\ \hline

$k_i$ & the parallelism bound of $T_{i}$, i.e., the maximum number of machines that can be allocated to and utilized by $T_{i}$ simultaneously \\ \hline

$y_{i}(t)$ & the number of machines allocated to $T_{i}$ at a time slot $t$ where $y_{i}(t)\in \{0,1,\cdots,k_{i}\}$ and set all $y_{i}(t)$ to $0$ initially \\ \hline

$W(t)$ & the total number of machines that are allocated out to the tasks at $t$, i.e., $W(t)=\sum_{T_{i}\in\mathcal{T}}{y_{i}(t)}$ \\ \hline

$\overline{W}(t)$ &  the total number of machines idle at $t$, i,e., $\overline{W}(t) = C - W(t)$ \\ \hline

$len_{i}$ & the minimum execution time of $T_{i}$ where $T_{i}$ is allocated $k_{i}$ machines in the entire execution process, i.e., $len_{i}=\lceil \frac{D_{i}}{k_{i}} \rceil$\\ \hline

$s_{i}$ & the slackness of a task, i.e., $\frac{d_{i}}{len_{i}}$, measuring the urgency of machine allocation to complete $T_{i}$ by the deadline \\ \hline

$s$  &  the minimum slackness of all tasks of $\mathcal{T}$, i.e., $\min_{T_{i}\in\mathcal{T}}{s_{i}}$ \\ \hline

$d$, $D$ & the maximum deadline and workload of all tasks of $\mathcal{T}$, i.e., $d=\max_{T_{i}\in\mathcal{T}}{d_{i}}$ and $D=\max_{T_{i}\in\mathcal{T}}{D_{i}}$ \\ \hline

$v_{i}^{\prime}$ & the marginal value of $T_{i}$, i.e., $v_{i}^{\prime}=\frac{v_{i}}{D_{i}}$ \\ \hline

$\{\tau_{1}, \cdots, \tau_{L}\}$ & the set of the deadlines $d_{i}$ of all tasks $T_{i}$ of $\mathcal{T}$, where $0=\tau_{0} < \tau_{1}< \cdots < \tau_{L}=d$ \\ \hline

$\mathcal{D}_{i}$ & all the tasks $\{T_{i,1}, T_{i,2}, \cdots, T_{i,n_{i}} \}$ of $\mathcal{T}$ that have a deadline $\tau_{i}$, $1\leq i\leq L$ \\ \hline

\end{tabular}
\label{table-1}
 \end{threeparttable}
\end{table}

There are $C$ identical machines and a set of $n$ tasks $\mathcal{T} = \{T_{1}, T_{2}, \cdots, T_{n}\}$. The task $T_{i}$ is specified by several characteristics: (1) {\em value} $v_{i}$, (2) {\em demand} (or {\em workload}) $D_{i}$, (3) {\em deadline} $d_{i}$, and (4) {\em parallelism bound} $k_{i}$. Time is discrete and the time horizon is divided into $d$ time slots: $\{1, 2, \cdots, d\}$, where $d=\max_{T_{i}\in\mathcal{T}}{d_{i}}$ and the length of each slot may be a fixed number of minutes. A task $T_{i}$ can only utilize the machines located in time slot interval $[1, d_{i}]$. The parallelism bound $k_{i}$ limits that, at any time slot $t$, $T_{i}$ can be executed on at most $k_{i}$ machines simultaneously. Let $k=\max_{T_{i}\in\mathcal{T}}{k_{i}}$ be the maximum parallelism bound; here, $k_{i}$ is a system parameter and $k$ is therefore assumed to be finite \cite{White}. An {\em allocation} of machines to a task $T_{i}$ is a function $y_{i}: [1, d_{i}]\rightarrow \{0, 1, 2, \cdots, k_{i}\}$, where $y_{i}(t)$ is the number of machines allocated to task $T_i$ at a time slot $t\in [1, d_{i}]$. In this model, $D_{i}, d_{i}\in\mathcal{Z}^{+}$ for all $T_{i}\in\mathcal{T}$.

For the system of $C$ machines, denote by $W(t)=\sum_{T_{i}\in\mathcal{T}}{y_{i}(t)}$ the system's workload at time slot $t$; and by $\overline{W}(t)=C-W(t)$ its complementary, i.e., the amount of available machines at time $t$. We say that time $t$ is {\em fully utilized} if $\overline{W}(t)=0$, and is {\em not fully utilized} if $\overline{W}(t)>0$. In addition, we assume that the maximum deadline of tasks is bounded. Given the model above, the following three scheduling objectives are considered separately in this paper:
\begin{itemize}
  \setlength\itemsep{0.17em}
\item \textit{The first objective} is social welfare maximization and it aims to choose an a subset $\mathcal{S}\subseteq\mathcal{T}$ and produce a feasible schedule for $\mathcal{S}$ so as to maximize the social welfare $\sum_{T_{i}\in \mathcal{S}}{v_{i}}$ (i.e., the sum of values of tasks completed by deadlines); here, the value $v_i$ of a task $T_{i}$ is gained if and only if it is {\em fully allocated} by the deadline, i.e., $\sum_{t\leq d_{i}}{y_{i}(t)} \geq D_{i}$, and partial execution of a task yields no value.
\item \textit{The second objective} is machine minimization, i.e., seeking the minimum number of machines needed to produce {\em a feasible schedule} of $\mathcal{T}$ on $C$ machines such that the task's parallelism bound and deadline constraints are not violated.
\item \textit{The third objective} is to minimize the maximum weighted lateness of tasks, i.e., $\min_{T_{i}\in\mathcal{T}}\{v_{i}\cdot(t_{i}-d_{i})\}$, where $t_{i}$ is the completion time of a task $T_{i}$.
\end{itemize}
Furthermore, we denote by $[l]$ and $[l]^{+}$ the sets $\{0, 1, \cdots, l\}$ and $\{1, 2, \cdots, l\}$ for a positive integer $l$. Let $len_{i} = \left\lceil D_i/k_i \right\rceil$ denote the {\em minimum execution time} of $T_{i}$. Define by $s_{i}= \frac{d_{i}}{len_{i}}$ the slackness of $T_{i}$, measuring the urgency of machine allocation (e.g., $s_{i}=1$ may mean that $T_{i}$ should be allocated the maximum amount of machines $k_{i}$ at every $t\in [1, d_{i}]$) and let $s=\min_{T_{i}\in\mathcal{T}}{s_{i}}$ be the slackness of the least flexible task ($s\ge 1$). Denote by $v_{i}^{\prime}=\frac{v_{i}}{D_{i}}$ the {\em marginal value} of task $T_{i}$, i.e., the value obtained by the system {\em per unit of demand}. We assume that the demand of each task is an integer. Let $D=\max_{T_{i}\in\mathcal{T}}\{D_{i}\}$ be the demand of the largest task. Given a set of tasks $\mathcal{T}$, the deadlines $d_{i}$ of all tasks $T_{i}\in\mathcal{T}$ constitute a finite set $\{\tau_{1}, \tau_{2}, \cdots, \tau_{L}\}$, where $L\leq n$, $\tau_{1},\cdots,\tau_{L}\in\mathcal{Z}^{+}$, and $0=\tau_{0} < \cdots < \tau_{L}=d$. Let $\mathcal{D}_{i}=\{T_{i,1}, T_{i,2}, \cdots, T_{i,n_{i}} \}$ denote the set of tasks with deadline $\tau_{i}$, where $\sum_{i=1}^{L}{n_{i}}=n$ ($i\in[L]^{+}$).

The notation of this section is used in the entire paper and summarized in Table~\ref{table-1}. Throughout this paper, we use $i$, $j$, $m$, $l$, or $m^{\prime}$ as subscripts to index the element of different sets such as tasks and use $t$ or $\overline{t}$ to index a time slot.

\section{Optimal Schedule}
\label{sec.optimal}

In this section, we identify a state under which $C$ machines can be said to be optimally utilized by a set of tasks. We then propose a scheduling algorithm that achieves such an optimal state. Besides Table~\ref{table-1}, the additional notation to be used in this section is summarized in Table~\ref{table-2}. 


\subsection{Optimal Resource Utilization State}
\label{sec.optimal-utilization}

In this paper, all tasks are denoted by a set $\mathcal{T}$, and we denote by $\mathcal{S}\subseteq\mathcal{T}$ an arbitrary subset of $\mathcal{T}$; all tasks of $\mathcal{T}$ with a deadline $\tau_{l}$ are denoted by $\mathcal{D}_{l}$ and we denote by $\mathcal{S}_{l}=\mathcal{S}\cap\mathcal{D}_{l}$ all tasks of $\mathcal{S}$ with a deadline $\tau_{l}$ ($l\in[L]^{+}$). In this subsection, we define the maximum amount of workload of $\mathcal{S}$ that could be processed in a fixed time interval $[\tau_{m}+1, \tau_{L}]$ on $C$ machines for all $m\in[L-1]$, where $\tau_{L}=d$, i.e., the maximum deadline of tasks.



\begin{figure}[!ht]
  \centering

  \includegraphics[width=3.2in]{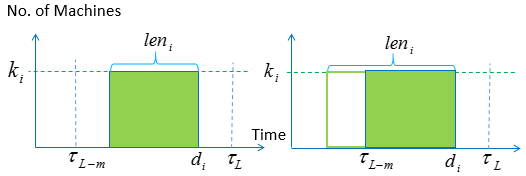}

  \caption{The green areas denote the maximum demand of $T_{i}$ that need or could be processed in $[\tau_{L-m}+1, \tau_{L}]$.}\label{Fig.1}
\end{figure}

We first define {\em the maximum amount of resource, denoted by $\lambda_{m}(\mathcal{S})$, that could be utilized by $\mathcal{S}$ in $[\tau_{L-m}+1, \tau_{L}]$ in an idealized case} where there is an indefinite number of machines, i.e., $C=\infty$, for all $m\in[L]^{+}$. To define this, we clarify the maximum amount of resource that an individual task $T_{i}$ can utilize in $[\tau_{L-m}+1, \tau_{L}]$. The basic constraints of malleable tasks with deadlines imply that:
\begin{itemize}
\item the deadline of $T_{i}$ limits that $T_{i}$ can only utilize the machines in $[1, d_{i}]$, and

\item the parallelism bound limits that $T_{i}$ can only utilize at most $k_{i}$ machines simultaneously at every time slot.
\end{itemize}
The tasks with $d_{i}\leq \tau_{L-m}$ cannot be executed in the interval $[\tau_{L-m}+1, \tau_{L}]$. Let us consider a task $T_{i}$ with $d_{i}\in [\tau_{L-m}+1, \tau_{L}]$. The number of time slots available in $[\tau_{L-m}+1, d_{i}]$ is $d_{i}-\tau_{L-m}$ in the discrete case, and, also recall that $len_{i}$ the (minimum) execution time of $T_{i}$ when it always utilizes the maximum number $k_{i}$ of machines throughout the execution. In the illustrative Fig.~\ref{Fig.1}, the green area in the left (resp. right) subfigure denotes the maximum demand of a task, i.e., $D_{i}$ (resp. $k_{i}\cdot(d_{i}-\tau_{L-m})$), that could or need be processed in $[\tau_{L-m}+1, \tau_{L}]$ in the case where the minimum execution time is such that $len_{i}\leq  d_{i}-\tau_{L-m}$ (resp. $len_{i} > d_{i}-\tau_{L-m}$).

As a consequence of the observation above, $\lambda_{m}(\mathcal{S})$ equals the sum of the maximum workload of every task in $\mathcal{S}$ that could executed in $[\tau_{L-m}+1, \tau_{L}]$ and is defined as follows.
\begin{definition}\label{Def-1}
Initially, set $\lambda_{m}(\mathcal{S})$ to zero for all $m\in [L]$. In the case where $C=\infty$ (i.e., the capacity constraint is ignored), for all $m\in[L]^{+}$, $\lambda_{m}(\mathcal{S})$ is defined as follows:
\begin{center}
$\lambda_{m}(\mathcal{S})\leftarrow \lambda_{m}(\mathcal{S})+\beta_{i}$, for every task $T_{i}\in\mathcal{S}$,
\end{center}
where $\beta_{i}$ is such that
\begin{itemize}
  \setlength\itemsep{0.2em}

\item if $T_{i}\in\mathcal{S}_{1}\cup\cdots\cup \mathcal{S}_{L-m}$ where $d_{i}\leq \tau_{L-m}$, $\beta_{i}\leftarrow 0$;

\item if $T_{i}\in\mathcal{S}_{L-m+1}\cup\cdots\cup \mathcal{S}_{L}$ where $d_{i}\geq \tau_{L-m}+1$, as illustrated in Figure~\ref{Fig.1},

\begin{itemize}
  \setlength\itemsep{0.25em}
\item in the case that $len_{i} \leq d_{i}-\tau_{L-m}$, $\beta_{i} \leftarrow D_{j}$;

\item otherwise, $\beta_{i} \leftarrow k_{i}\cdot(d_{i}-\tau_{L-m})$.
\end{itemize}
\end{itemize}
Here, $\beta_{i}$ represents the maximum workload of a task $T_{i}$ that could be executed in $[\tau_{L-m}+1, \tau_{L}]$.
\end{definition}

Built on Definition~\ref{Def-1}, we move to the case where $C$ is finite and define the maximum amount of resource $\lambda_{m}^{C}(\mathcal{S})$ that can be utilized by $\mathcal{S}$ on $C$ machines in every $[\tau_{L-m}+1, \tau_{L}]$, $m\in [1, L]$.


\begin{figure}[!ht]
  \centering

  \includegraphics[width=3.45in]{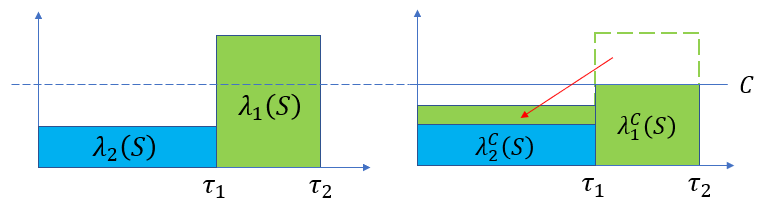}

  \caption{Derivation from the definition $\lambda_{m}(\mathcal{S})$ to $\lambda_{m}^{C}(\mathcal{S})$.}\label{Fig.2}
\end{figure}


To help readers grasp the underlying intuition in the process of deriving $\lambda_{m}^{C}(\mathcal{S})$ from $\lambda_{m}(\mathcal{S})$, we first illustrate this process in the case where $L=2$ with the help of Fig.~\ref{Fig.2}. Fig.~\ref{Fig.2} (left) illustrates the parameter $\lambda_{m}(\mathcal{S})$ in Definition~\ref{Def-1}, where the green area denotes $\lambda_{1}(\mathcal{S})$ and the green and blue areas together denote $\lambda_{2}(\mathcal{S})$. As illustrated in Fig.~\ref{Fig.2} (right), due to the capacity constraint that $C$ is finite, we have that
\begin{enumerate}
\item [\textbf{(\rmnum{1})}] $C\cdot(\tau_{2}-\tau_{1})$ is the maximum possible workload that could be processed in $[\tau_{1}+1, \tau_{2}]$ due to the capacity constraint, and $\lambda_{1}(\mathcal{S})$ is the maximum available workload of $\mathcal{S}$ that needs to be processed in $[\tau_{1}+1, \tau_{2}]$ due to the deadline and parallelism constraints. As a result, on $C$ machines, the maximum workload $\lambda_{1}^{C}(\mathcal{S})$ of $\mathcal{S}$ that can be processed in $[\tau_{1}+1, \tau_{2}]$ is the size of the green area in $[\tau_{1}+1, \tau_{2}]$, i.e.,
    \begin{align*}
    \lambda_{1}^{C}(\mathcal{S})  =\min\{C\cdot(\tau_{2}-\tau_{1}), \lambda_{1}(\mathcal{S})\}=C\cdot(\tau_{2}-\tau_{1}).
    \end{align*}

\item [\textbf{(\rmnum{2})}] After $\lambda_{1}^{C}(\mathcal{S})$ workload of $\mathcal{S}$ has been processed in $[\tau_{1}+1, \tau_{2}]$, the remaining workload of $\mathcal{S}$ that needs to processed in $[1, \tau_{1}]$ is $\lambda_{2}(\mathcal{S})-\lambda_{1}^{C}(\mathcal{S})$; the maximum workload that could be processed in $[1, \tau_{1}]$ is $C\cdot \tau_{1}$ due to the capacity constraint. As a result, $\lambda_{2}^{C}(\mathcal{S})$ is defined as follows:
\begin{align*}
& \lambda_{2}^{C}(\mathcal{S})\\
& =\lambda_{1}^{C}(\mathcal{S})+\min\{ C\cdot(\tau_{1}-\tau_{0}),\enskip \lambda_{2}(\mathcal{S})-\lambda_{1}^{C}(\mathcal{S}) \} \\
&=\min\{C\cdot(\tau_{2}-\tau_{0}), \enskip \lambda_{2}(\mathcal{S})\} = \lambda_{2}(\mathcal{S}),
\end{align*}
i.e., the size of all the colored areas in $[\tau_{0}+1, \tau_{2}]$.
\end{enumerate}
Generalizing the above process, we derived a recursive definition of $\lambda_{m}^{C}(\mathcal{S})$.

\begin{definition}\label{Def-2}
In the case where $C$ is finite (i.e., with the capacity constraint), for all $m\in[L]$, the maximum amount of resource $\lambda_{m}^{C}(\mathcal{S})$ that could be utilized by $\mathcal{S}$ in $[\tau_{L-m}+1, \tau_{L}]$ is defined by the following recursive procedure:
\begin{itemize}
  \setlength\itemsep{0.3em}
\item set $\lambda_{0}^{C}(\mathcal{S})$ to zero trivially;
\item set $\lambda_{m}^{C}(\mathcal{S})$ to the sum of $\lambda_{m-1}^{C}(\mathcal{S})$ and \\
$\min\left\{\lambda_{m}(\mathcal{S})-\lambda_{m-1}^{C}(\mathcal{S}),\enskip C\cdot\left(\tau_{L-m+1}-\tau_{L-m}\right) \right\}$.
\end{itemize}
\end{definition}



We finally state our definition that formalizes the concept of optimal utilization of $C$ machines by a set $\mathcal{S}$ of malleable tasks with deadlines:

\begin{definition}[Optimal Resource Utilization State]\label{Def-3}
We say that $C$ machines are optimally utilized by a set of tasks $\mathcal{S}$, if, for all $m\in[L]^{+}$, $\mathcal{S}$ utilizes $\lambda_{m}^{C}(\mathcal{S})$ resources in $[\tau_{L-m}+1, d]$ on $C$ machines.
\end{definition}

We define $\mu_{m}^{C}(\mathcal{S})=\sum_{T_{i}\in\mathcal{S}}{D_{i}}-\lambda_{L-m}^{C}(\mathcal{S})$ as the remaining (minimum) workload of $\mathcal{S}$ that needs to be processed after $\mathcal{S}$ has maximally utilized $C$ machines in $[\tau_{m}+1, \tau_{L}]$ for all $m\in [L-1]$.

\begin{lemma}[Boundary Condition]\label{boundary-condition}
If there exists a feasible schedule for $\mathcal{S}$,  the following inequality holds for all $m\in[L-1]$:
\begin{center}
$\mu_{m}^{C}(\mathcal{S})\leq C\cdot\tau_{m}$,
\end{center}
which is referred to as \textbf{{\em boundary condition}} in this paper.
\end{lemma}
\begin{proof}
Recall the definition of $\lambda_{L-m}^{C}(\mathcal{S})$ in Definition~\ref{Def-2}. After $\mathcal{S}$ has maximally utilized the machines in $[\tau_{m}+1, d]$ and been allocated the maximum amount of resource, i.e., $\lambda_{L-m}^{C}(\mathcal{S})$, if there exists a feasible schedule for $\mathcal{S}$, the total amount of the remaining demands of $\mathcal{S}$ to be processed should be no more than the capacity $C\cdot\tau_{m}$ in $[1, \tau_{m}]$.
\end{proof}

\begin{table}
\centering
\begin{threeparttable}[!ht]

\caption{Main Notation for the algorithms LDF($\mathcal{S}$), Fully-Utilize($i$), Fully-Allocate($i$), and AllocateRLM($i$, $\theta_{1}$, $x$)}

\begin{tabular}{|C{1.2cm}|C{6.2cm}|}
\hline
   Notation & Explanation\\
\hline

$\mathcal{S}$ & a set of tasks to be allocated by LDF($\mathcal{S}$) and $\mathcal{S}\subseteq\mathcal{T}$ \\ \hline

$\mathcal{S}_{i}$ & the tasks of $\mathcal{S}$ with a deadline $\tau_{i}$  \\ \hline

$\lambda_{m}(\mathcal{S})$ & the maximum amount of resource that could be utilized by $\mathcal{S}$ in $[\tau_{L-m}+1, \tau_{L}]$ in an idealized case where there is an indefinite number of machines, $m\in[L]^{+}$   \\ \hline

$\lambda_{m}^{C}(\mathcal{S})$  & the maximum amount of resource that can be utilized by $\mathcal{S}$ on $C$ machines in every $[\tau_{L-m}+1, \tau_{L}]$, $m\in[L]^{+}$  \\ \hline

$\mu_{m}^{C}(\mathcal{S})$  &  the remaining workload of $\mathcal{S}$ that needs to be processed after $\mathcal{S}$ has optimally utilized $C$ machines in $[\tau_{m}+1, \tau_{L}]$, i.e., $\mu_{m}^{C}(\mathcal{S})=\sum_{T_{i}\in\mathcal{S}}{D_{j}}-\lambda_{L-m}^{C}(\mathcal{S})$, $m\in[L-1]$  \\ \hline

$T_{i}$ & a task that is being allocated by the algorithm LDF($\mathcal{S}$); the actual allocation is done by Allocate-B($i$) \\ \hline

$\mathcal{S}^{\prime}$ & so far, all tasks that have been fully allocated by LDF($\mathcal{S}$) and are considered before $T_{i}$  \\ \hline

$\mathcal{S}^{\prime}$ & $\mathcal{S}^{\prime\prime}=\mathcal{S}^{\prime}\cup\{ T_{i} \}$  \\ \hline

$t_{0}$ & a turning point defined in Property~\ref{proper-2}, with time slots respectively later than and no later than $t_{0}$ having different resource utilization state  \\ \hline

$t_{1}$ & similar to $t_{0}$, a turning point defined in Lemma~\ref{lemma-fully-utilize} upon completion of Fully-Utilize($i$)  \\ \hline

$t_{2}$ & similar to $t_{0}$, a turning point defined in Lemma~\ref{lemma-allocate} upon completion of Fully-Allocate($i$)  \\ \hline

$t^{\prime}$ & the latest time slot in $[1, \tau_{m}]$ with $\overline{W}(t^{\prime})>0$  \\ \hline

$t^{\prime\prime}$, $t^{\prime\prime\prime}$ & a time slot that satisfies some property defined and only used in Section~\ref{sec.phase-2}  \\ \hline

\end{tabular}
\label{table-2}
 \end{threeparttable}
\end{table}

\subsection{Scheduling Algorithm}
\label{sec.scheduling}

In this section, we assume that $\mathcal{S}$ satisfies the boundary condition above, and, propose an algorithm LDF($\mathcal{S}$) that achieves the optimal resource utilization state, producing a feasible schedule for $\mathcal{S}$.

\subsubsection{Overview of LDF($\mathcal{S}$)}
\label{sec.overview}

\begin{figure*}
  \centering
  \includegraphics[width=2.2in]{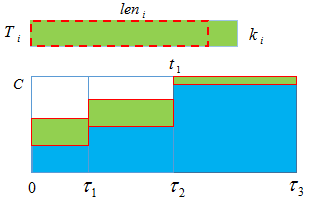}
  \includegraphics[width=2.05in]{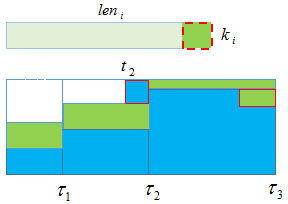}
  \includegraphics[width=2.2in]{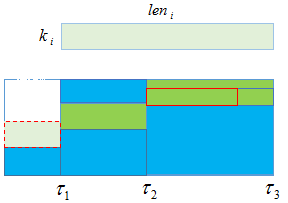}

  \caption{The resource allocation state of $T_{i}$ and the previous tasks $\mathcal{S}^{\prime}$ respectively upon completion of Fully-Utilize($i$), Fully-Allocate($i$), and AllocateRLM($i$, $1$, $t_{2}+1$) where $L=m=3$: the blue area in the rectangle denotes the allocation to the previous tasks that satisfies Property~\ref{proper-1} and Property~\ref{proper-2} before executing Allocate-B($i$) while the green area in the interval $[1, \tau_{3}]$ denotes the allocation to $T_{i}$ at every time slot.}\label{Fig.5}
\end{figure*}


Initially, for all $T_{i}\in\mathcal{S}$ and $t\in[1, d]$, we set the allocation $y_{i}(t)$ to zero and LDF($\mathcal{S}$) runs as follows:
\begin{enumerate}
 \setlength\itemsep{0.15em}
  \item the tasks in $\mathcal{S}$ are considered in the non-increasing order of the deadlines, i.e., in the order of $\mathcal{S}_{L}$, $\mathcal{S}_{L-1}$, $\cdots$, $\mathcal{S}_{1}$;

  \item for a task $T_{i}$ being considered, the algorithm Allocate-B($i$), presented as Algorithm~\ref{Allocate-B}, is called to allocate $D_{i}$ resource to $T_{i}$ under the constraints of deadline and parallelism bound.
\end{enumerate}

At a high level, we show in the following that, only if $\mathcal{S}$ satisfies the boundary condition and the resource utilization satisfies some properties upon every completion of Allocate-B($\cdot$), all tasks in $\mathcal{S}$ will be fully allocated.

Now, we begin to elaborate this high-level idea. In LDF($\mathcal{S}$), when a task $T_{i}$ is being considered, suppose that the allocated task $T_{i}$ belongs to $\mathcal{S}_{m}$ and denote by $\mathcal{S}^{\prime}\subseteq\mathcal{S}_{L}\cup\cdots\cup\mathcal{S}_{m}$ the tasks that have been fully allocated so far and are considered before $T_{i}$. Here, $\mathcal{S}$ satisfies the boundary condition and so do all its subsets including $\mathcal{S}^{\prime}$ and $\mathcal{S}^{\prime}\cup\{ T_{i} \}$.  Before the execution of Allocate-B($i$), we assume that the resource utilization satisfies the following two properties:


Recall the optimal resource utilization state in Definitions~\ref{Def-3}, and the first property is that such an optimal resource utilization state of $C$ machines is achieved by the current allocation to $\mathcal{S}^{\prime}$.
\begin{property}\label{proper-1}
For all $l\in[L]^{+}$, $\mathcal{S}^{\prime}$ is allocated $\lambda_{l}^{C}(\mathcal{S}^{\prime})$ resource in $[\tau_{L-l}+1, d]$ where $\lambda_{l}^{C}(\mathcal{S}^{\prime})$ is defined in Definition~\ref{Def-2}.
\end{property}

The second property is that a stepped-shape resource utilization state is achieved in $[1, \tau_{m}]$ by the current allocation to $\mathcal{S}^{\prime}$.
\begin{property}\label{proper-2}
If there exists a time slot $t\in [1,\tau_{m}]$ such that $\overline{W}(t)>0$, let $t_{0}$ be the latest slot in $[1,\tau_{m}]$ such that $\overline{W}(t^{\prime})>0$; then we have $\overline{W}(1)\geq \overline{W}(2)\geq \cdots \geq \overline{W}(t_{0})$.
\end{property}

If Property~\ref{proper-1} and Property~\ref{proper-2} hold, we will show in Section~\ref{sec.phase-1} and \ref{sec.phase-2} that, there exists an algorithm Allocate-B($i$) such that, upon completion of Allocate-B($i$), the following two properties are satisfied:

\begin{property}\label{proper-3}
$T_{i}$ is fully allocated.
\end{property}

\begin{property}\label{proper-4}
The resource allocation to $\mathcal{S}^{\prime}\cup\{T_{i}\}$ still satisfies Property~\ref{proper-1} and Property~\ref{proper-2}.
\end{property}
Due to the existence of the above Allocate-B($i$), only if $\mathcal{S}$ satisfies the boundary condition, $\mathcal{S}$ can be fully allocated by LDF($\mathcal{S}$). The reason for this can be explained by induction. When the first task $T_{i}$ in $\mathcal{S}$ is considered, $\mathcal{S}^{\prime}$ is empty, and, before the execution of Allocate-B($i$), Property~\ref{proper-1} and Property~\ref{proper-2} holds trivially. Further, upon completion of Allocate-B($i$), $T_{i}$ will be fully allocated by Allocate-B($i$) due to Property~\ref{proper-3}, and Property~\ref{proper-4} still holds. Then, assume that $\mathcal{S}^{\prime}$ that denotes the current fully allocated tasks is nonempty and Property~\ref{proper-1} and Property~\ref{proper-2} hold; the task $T_{i}$ being considered by LDF($\mathcal{S}$) will still be fully allocated and Property~\ref{proper-3} and Property~\ref{proper-4}, upon completion of Allocate-B($i$). Hence, all tasks in $\mathcal{S}$ will be finally fully allocated upon completion of LDF($\mathcal{S}$).

In the rest of this subsection, we will propose an algorithm Allocate-B($i$) mentioned above such that, upon completion of Allocate-B($i$), Property~\ref{proper-3} and Property~\ref{proper-4} holds, {\em if}, before the execution of Allocate-B($i$), the resource allocation to $\mathcal{S}^{\prime}$ satisfies Property~\ref{proper-1} and Property~\ref{proper-2} hold. Then, we immediately have the following proposition:
\begin{proposition}\label{property-task}
If $\mathcal{S}$ satisfies the boundary condition, LDF($\mathcal{S}$) will produce a feasible schedule of $\mathcal{S}$ on $C$ machines.
\end{proposition}


\vspace{0.05em}\noindent\textbf{Overview of Allocate-B($i$).} The construction of Allocate-B($i$) will proceed with two phases. In the first phase, we introduce what operations are feasible to make $T_{i}$ fully allocated $D_{i}$ resource under Property~\ref{proper-1} and Property~\ref{proper-2}. We will use two algorithms Fully-Utilize($i$) and Fully-Allocate($i$) to describe them, and the sketch of this phase is as follows:
\begin{itemize}
 \setlength\itemsep{0.2em}

\item From the deadline $d_{i}$ towards earlier time slots, Fully-Utilize($i$) makes $T_{i}$ fully utilize the maximum amount of machines available at every slot. Upon its completion, the resource allocation state is illustrated in Fig.~\ref{Fig.5} (left) and will be described in Lemma~\ref{lemma-fully-utilize}.

\item If $T_{i}$ is not fully allocated yet, as illustrated in Fig.~\ref{Fig.5} (middle), Fully-Allocate($i$) transfers the allocation of the previous tasks $\mathcal{S}^{\prime}$ at the time slots closest to $d_{i}$ to the latest slots in $[1, d_{i}]$ that have idle machines, so that, $k_{i}$ machines are finally allocated to $T_{i}$ at each of these slots closest to $d_{i}$; as a result, $T_{i}$ is fully allocated.

\end{itemize}
Upon completion of Fully-Allocate($i$), the resource allocation state may not satisfy Property~\ref{proper-1} and Property~\ref{proper-2}, as illustrated by Fig.~\ref{Fig.5} (middle). We therefore propose an algorithm AllocateRLM($i$, $\eta_{1}$, $x$) in the second phase:
\begin{itemize}
\item the allocation of the previous tasks at every slot $t$ closest to the deadline is again transferred to the latest slots that have idle machines, and, the allocation of $T_{i}$ in the earliest slots is transferred to $t$; the final resource allocation state is illustrated in Fig.~\ref{Fig.5} (right).
\end{itemize}

Following the above high-level ideas, the details of the first and second phases are respectively presented in Section~\ref{sec.phase-1} and Section~\ref{sec.phase-2}.

\subsubsection{Phase 1}
\label{sec.phase-1}





Now, we introduce Fully-Utilize($i$) and Fully-Allocate($i$) formally. Before their execution, recall that we assume in the last subsection $T_{i}\in\mathcal{S}_{m}$; the allocation to the previously allocated tasks in $\mathcal{S}^{\prime}$ satisfies Properties~\ref{proper-1} and~\ref{proper-2}. The whole set of tasks $\mathcal{S}$ to be scheduled satisfies the boundary condition where $\mathcal{S}^{\prime}\varsubsetneq\mathcal{S}$.

Initially, set $y_{i}(t)$ to zero for all time slots, and, \textbf{Fully-Utilize($i$)} operates as follows:
\begin{itemize}
\item for every time slot $t$ from the deadline $d_{i}$ to 1, set $y_{i}(t)\leftarrow\min\{k_{i}, D_{i}-\sum_{\overline{t}=t+1}^{d_{i}}{y_{i}(\overline{t})}, \overline{W}(t)\}$.
\end{itemize}
Here, $k_{i}$ is the parallelism bound, $D_{i}-\sum_{\overline{t}=t+1}^{d_{i}}{y_{i}(\overline{t})}$ is the remaining workload to be processed upon completion of its allocations at slots $t+1, \cdots, d_{i}$, and $\overline{W}(t)$ is the number of machines idle at $t$; specially, $\sum_{\overline{t}=d_{i}+1}^{d_{i}}{y_{i}(\overline{t})}$ is set to 0, representing the allocation to $T_{i}$ is zero before the allocation begins. Their minimum denotes the maximum amount of machines that $T_{i}$ can or needs to utilize at $t$ after the allocation to $T_{i}$ at slots $t+1,\cdots, d_{i}$.

Before executing Fully-Utilize($i$), the resource allocation to the previous tasks $\mathcal{S}^{\prime}$ satisfies Property~\ref{proper-1}. Its execution does not change the previous allocation to $\mathcal{S}^{\prime}$. Let
\begin{center}
$\mathcal{S}^{\prime\prime}=\mathcal{S}^{\prime}\cup\{T_{i}\}$.
\end{center}
Since $d_{i}=\tau_{m}$, the workload of $T_{i}$ can only be processed in $[1, \tau_{m}]$; the maximum workload of $\mathcal{S}^{\prime\prime}$ that could be processed in $[\tau_{m}+1, \tau_{L}]$ still equals its counterpart when $\mathcal{S}^{\prime}$ is considered. We come to the following conclusion in order to not violate the boundary condition:


\begin{lemma}\label{exit-condition-1}
Upon completion of Fully-Utilize($i$), all tasks of $\mathcal{S}$ would have been fully allocated {\em in the case that} the total allocation to $\mathcal{S}^{\prime\prime}$ in $[1, \tau_{m}]$ is $C\cdot\tau_{m}$, i.e., $C\cdot\tau_{m} = \sum_{T_{j}\in\mathcal{S}^{\prime\prime}}{\sum_{\overline{t}=1}^{\tau_{m}}{y_{j}(\overline{t})}}$.
\end{lemma}
\begin{proof}
See the appendix for detailed proof.
\end{proof}




Upon completion of Fully-Utilize($i$), {\em in the other case that} the total allocation to $\mathcal{S}^{\prime\prime}$ is $< C\cdot\tau_{m}$, even for $T_{i}$, it may not be fully allocated. In this case, there exists a slot $t\in [1, \tau_{m}]$ such that $\overline{W}(t)>0$, and let $t_{1}$ denote the latest such time slot in $[1, \tau_{m}]$.

In Fully-Utilize($i$), upon completion of the allocation to $T_{i}$ at $t\in [1, t_{1}]$, if $T_{i}$ has not been fully allocated yet, it is allocated $k_{i}$ machines at $t$, i.e., $y_{i}(t)=k_{i}$, and these allocations in $[1, t_{1}]$ are non-decreasing, i.e.,
\begin{center}
$y_{i}(1)\leq y_{i}(2)\leq \cdots\leq y_{i}(t_{1})$.
\end{center}
Before executing Fully-Utilize($i$), the numbers of idle machines have a stepped shape, i.e., $\overline{W}(1)\geq \cdots\geq \overline{W}(t_{0})$ by Property~\ref{proper-2}, where $\overline{W}(t)=C-W(t)=C-\sum_{T_{j}\in\mathcal{S}^{\prime}}{y_{j}(t)}$. Upon its completion, with $y_{i}(t)$ machines occupied by $T_{i}$, we conclude that

\begin{lemma}\label{lemma-fully-utilize}
Upon completion of Fully-Utilize($i$), in the case that the total allocation to $\mathcal{S}^{\prime\prime}$ is $< C\cdot\tau_{m}$,
\begin{itemize}
 \setlength\itemsep{0.3em}
  \item for all $t\in[1, t_{1}]$, if the total allocation of $T_{i}$ in $[t, d_{i}]$ is $<$ the workload of $T_{i}$, i.e.,  $D_{i}-\sum_{\overline{t}=t}^{d_{i}}{y_{i}(\overline{t})}>0$, we have $y_{i}(t)=k_{i}$;

  \item the numbers of idle/unallocated machines in $[1, t_{1}]$ have a stepped shape, i.e., $\overline{W}(1)\geq \cdots \geq \overline{W}(t_{1})>0$.
\end{itemize}
\end{lemma}


With the current resource allocation state shown in Lemma~\ref{lemma-fully-utilize}, we are enabled to propose the algorithm Fully-Allocate($i$) to make $T_{i}$ fully allocated. Deducting the current resource allocated to $T_{i}$, let $\Omega$ denote the remaining workload of $T_{i}$ to be allocated more resource, i.e.,
\begin{center}
$\Omega=D_{i}-\sum_{\overline{t}\leq d_{i}}{y_{i}(\overline{t})}$.
\end{center}
For every slot $t\in[1, t_{1}]$, the number $y_{i}(t)$ of machines allocated to $T_{i}$ at $t$ is $k_{i}$ in the case that $\Omega>0$ by Lemma~\ref{lemma-fully-utilize}. The total workload $D_{i}$ is $\leq k_{i}\cdot d_{i}$, and, with the parallelism bound, \textbf{Fully-Allocate}($i$) considers each slot $t$ from $d_{i}$ towards $t_{1}+1$ and operates as follows repeatedly at each $t$ until $\Omega = 0$:
\begin{enumerate}
 \setlength\itemsep{0.3em}

  \item   $\Delta\leftarrow \min\{k_{i}-y_{i}(t), \Omega\}$.

          \setlength{\parindent}{1em}\vspace{0.15em}{\em Notes.} $k_{i}-y_{i}(t)$ is the maximum number of additional machines that could be utilized at $T$ with its previous allocation $y_{i}(t)$. 

  \item Call Routine($\Delta$, 1, 0, $t$), presented as Algorithm~\ref{Routine}.

        \setlength{\parindent}{1em}\vspace{0.15em}{\em Notes.} Routine($\cdot$) aims to increase the number of available machines $\overline{W}(t)$ at $t$ to $\Delta$ by transferring the allocation of other tasks to an earlier time slot.

  \item Allocate $\overline{W}(t)$ more machines to $T_{i}$ at $t$: $y_{i}(t)\leftarrow y_{i}(t)+\overline{W}(t)$, and, $\Omega\leftarrow \Omega-\overline{W}(t)$.

      \setlength{\parindent}{1em}\vspace{0.15em}{\em Notes.} $\Omega$ denotes the currently remaining workload to be processed; in this iteration, if $\Omega>0$ currently, $\Delta=k_{i}-y_{i}(t)$ and the allocation $y_{i}(t)$ of $T_{i}$ at $t$ becomes $k_{i}$.

  \item $t\leftarrow t-1$.
\end{enumerate}

\begin{algorithm}[!ht]
\SetKwInOut{Begin}{Begin}
\SetKwInOut{Input}{Input}


  \While{$\overline{W}(t)<\Delta$}{
	
    $t^{\prime}\leftarrow$  the current time slot earlier than and closest to $t$ so that $\overline{W}(t^{\prime})>0$\;

    \If{$\eta_{1}=1$}{
      \If{there exists no such $t^{\prime}$}{
        $flag \leftarrow 1$, break\;
      }
    }
    \Else{
      \If{$t^{\prime}\leq t_{m}^{th}$, or there exists no such $t^{\prime}$}{
        $flag \leftarrow 1$, break\;
      }
    }
	
	\If{$\eta_{2}=1$}{
        \If{$\sum_{\overline{t}=1}^{t^{\prime}-1}{y_{i}(\overline{t})}\leq \overline{W}(t)$}{
           $flag \leftarrow 1$, break\;
        }
	}

	let $i^{\prime}$ be a task such that $y_{i^{\prime}}(t)>y_{i^{\prime}}(t^{\prime})$\;

    $y_{i^{\prime}}(t)\leftarrow y_{i^{\prime}}(t)-1$, $y_{i^{\prime}}(t^{\prime})\leftarrow y_{i^{\prime}}(t^{\prime})+1$\;

 }
\caption{Routine($\Delta$, $\eta_{1}$, $\eta_{2}$, $t$)\label{Routine}}
\end{algorithm}

Now, we explain the existence of $T_{i^{\prime}}$ in line 12 of Routine($\cdot$) and the reason why $T_{i}$ will be finally fully allocated by Fully-Allocate($i$). The only operation that changes the allocation to $T_{i}$ occurs at the third step of Fully-Allocate($i$). Hence, we have
\begin{lemma}\label{lemma-decrease}
Fully-Allocate($i$) never decreases the allocation $y_{i}(t)$ to $T_{i}$ at any time slot $t\in [1,d_{i}]$ during its execution, compared with the $y_{i}(t)$ just before executing Fully-Allocate($i$).
\end{lemma}

We could also prove by contradiction that
\begin{lemma}\label{lemma-existence}
When Routine($\Delta$, 1, 0, $t$) is called, the task $T_{i^{\prime}}$ in line 12 always exists if (\rmnum{1}) the condition in line 4 is false, (\rmnum{2}) $y_{i}(t^{\prime})=k_{i}$, and (\rmnum{3}) $y_{i}(t)<k_{i}$ and $\overline{W}(t)=0$.
\end{lemma}
\begin{proof}
See the Appendix for the detailed proof.
\end{proof}

At each iteration of Fully-Allocate($i$), if there exists a $t^{\prime}$ such that $\overline{W}(t^{\prime})>0$ in the loop of Routine($\cdot$), with Lemmas~\ref{lemma-fully-utilize} and~\ref{lemma-decrease}, we have $y_{i}(t^{\prime})=k_{i}$. Since $\Omega>0$ and $y_{i}(t) < k_{i}$, when Routine($\cdot$) is called, we have $\overline{W}(t)=0$; otherwise, this contradicts Lemma~\ref{lemma-fully-utilize}. With Lemma~\ref{lemma-existence}, we will conclude that the task $T_{i^{\prime}}$ in line 11 exists when it is called by Fully-Allocate($i$). In addition, the operation at line 12 of Routine($\cdot$) does not change the total allocation to $T_{i^{\prime}}$, and violate the parallelism bound $k_{i^{\prime}}$ of $T_{i^{\prime}}$ since the current $y_{i^{\prime}}(t^{\prime})$ is no more than the initial $y_{i^{\prime}}(t)$.

\begin{proposition}\label{proposi-fully-allocate}
Upon completion of Fully-Allocate($i$), the task $T_{i}$ is fully allocated.
\end{proposition}
\begin{proof}
Fully-Allocate($i$) ends up with one of the following three events. The first is that the condition in line 4 of Routine($\cdot$) is true. Then, with Lemma~\ref{exit-condition-1}, all tasks in $\mathcal{S}$ has been fully allocated. If the first event doesn't happen, the second is $\Omega=0$ and $T_{i}$ has been fully allocated. If the first and second events don't happen, the third occurs after finishing the iteration of Fully-Allocate($i$) at time slot $t_{1}+1$; then, there is a slot $t^{\prime}$ in $[1, t_{1}+1]$ that are not fully utilized. As a result, we have that $T_{i}$ has been fully allocated; otherwise, $\Omega>0$, which implies $y_{i}(t_{1}+1)=k_{i}$, and we have $y_{i}(t)=k_{i}$ for all $t\in[d_{i}]^{+}$ due to Lemma~\ref{lemma-fully-utilize}, which contradicts $\Omega>0$. Finally, the theorem holds.
\end{proof}

Upon completion of Fully-Utilize($i$), the resource allocation feature is described in Lemma~\ref{lemma-fully-utilize} and illustrated in Fig.~\ref{Fig.5} (left). Built on this, Fully-Allocate($i$) considers every slot from $d_{i}$ to $t_{1}+1$; as illustrated in Fig.~\ref{Fig.5} (middle) and roughly explained there, upon completion of Fully-Allocate($i$), the resource allocation feature is described as follows.

\begin{lemma}\label{lemma-allocate}
Upon completion of Fully-Allocate($i$), if there exists a $t\in [1, \tau_{m}]$ such that $\overline{W}(t)>0$, let $t_{2}$ be the latest such slot:
\begin{itemize}
 \setlength\itemsep{0.2em}

  \item for all $t\in[1, t_{2}]$, if the total allocation of $T_{i}$ in $[t, d_{i}]$ is $<D_{i}$ (i.e., $D_{i}-\sum_{\overline{t}=t}^{d_{i}}{y_{i}(\overline{t})}>0$), we have $y_{i}(t)=k_{i}$;

  \item the numbers of available machines in $[1, t_{2}]$ have a stepped shape, i.e, $\overline{W}(1)\geq \cdots \geq \overline{W}(t_{2})>0$.
\end{itemize}
Here $t_{2}\geq t_{1}$.
\end{lemma}
\begin{proof}
See the Appendix for the formal proof.
\end{proof}

\subsubsection{Phase 2}
\label{sec.phase-2}

Now, we introduce AllocateRLM($i$, $\eta_{1}$, $x$). Recall that $t^{\prime}$ always denotes the slot closest to but earlier than $\tau_{m}$ (i.e., the latest slot in $[1, \tau_{m}]$) such that $\overline{W}(t^{\prime})>0$ and, before executing AllocateRLM($\cdot$), $t^{\prime}=t_{2}$ due to Lemma~\ref{lemma-allocate}. The resource allocation feature before executing AllocateRLM($i$, $\eta_{1}$, $x$) is described in Lemma~\ref{lemma-allocate} and illustrated in Fig.~\ref{Fig.5} (middle); the underlying intuition of AllocateRLM($i$, $\eta_{1}$, $x$) is described in Section~\ref{sec.overview} and, upon its completion, the resource allocation feature is illustrated in Fig.~\ref{Fig.5} (right).

Formally, \textbf{AllocateRLM($i$, $\eta_{1}$, $x$)} considers each slot $t$ from $d_{i}$ to $x$ and operates as follows repeatedly at each $t$ until the total allocation of $T_{i}$ in $[1, t-1]$, i.e., $\sum_{\overline{t}=1}^{t-1}{y_{i}(\overline{t})}$, equals zero, where $\eta_{1}=1$ and $x=t_{2}+1$ in this section:
\begin{enumerate}
 \setlength\itemsep{0.1em}

  \item  $\Delta\leftarrow \min\{k_{i}-y_{i}(t), \sum_{\overline{t}=1}^{t-1}{y_{i}(\overline{t})}\}$.

        \setlength{\parindent}{1em}\vspace{0.01em}{\em Notes.} $\Delta$ denotes the maximum allocation of $T_{i}$ before $t$ that can be transferred to $t$ with the parallelism constraint.

  \item if $\Delta=0$, go to the step 5; otherwise, execute the steps 3-5.

  \item set $flag \leftarrow 0$ and call Routine($\Delta$, $\eta_{1}$, 1, $t$).

      \setlength{\parindent}{1em}\vspace{0.01em}{\em Notes.} Routine($\cdot$) aims to increase the number $\overline{W}(t)$ of available machines at $t$ to $\Delta$. With Lemma~\ref{lemma-allocate}, the slots $t^{\prime}$ earlier than but closest to $t_{2}+1$ in Routine($\cdot$) will become fully utilized one by one and, together with the next step 4, upon completion of the iteration at $t$, for all $\overline{t}\in \left[ t^{\prime}+1, d_{i} \right]$, $\overline{W}(\overline{t})=0$.
	
  \item set $\theta\leftarrow\overline{W}(t)$. Allocate $\theta$ more machines to $T_{i}$:
  \begin{center}
  $y_{i}(t)\leftarrow y_{i}(t)+\overline{W}(t)$,
  \end{center}
   and reduce the allocations of $T_{i}$ at the earliest slots by $\theta$: in particular, let $t^{\prime\prime}$ be such a slot that $\sum_{\overline{t}=1}^{t^{\prime\prime}-1}{y_{i}(\overline{t})}<\theta$ and $\sum_{\overline{t}=1}^{t^{\prime\prime}}{y_{i}(\overline{t})}\geq \theta$, and execute the following operations:
      \begin{enumerate}
 \setlength\itemsep{0.15em}
        \item set $\theta\leftarrow \theta-\sum_{\overline{t}=1}^{t^{\prime\prime}-1}{y_{i}(\overline{t})}$, and, for every $\overline{t} \in [ 1, t^{\prime\prime}-1]$, $y_{i}(\overline{t})\leftarrow 0$;
        \item $y_{i}(t^{\prime\prime})\leftarrow y_{i}(t^{\prime\prime})-\theta$.
      \end{enumerate}

      \setlength{\parindent}{1em}{\em Notes.} The number of idle machines at $t$ becomes zero again, i.e., $\overline{W}(t)=0$. The allocation $y_{i}(t)$ of $T_{i}$ at every $\overline{t}\in [1, t^{\prime\prime}-1]$ is zero.

  \item if Routine($\Delta$, $\eta_{1}$, 1, $t$) does not change the value of $flag$, i.e., $flag = 0$, $t\leftarrow t-1$; otherwise, exit AllocateRLM($i$, $\eta_{1}$, $x$).

\end{enumerate}
Here, at each slot $t$, when Routine($\cdot$) is called, $\Delta>0$, and $y_{i}(t)<k_{i}$. Further, we have $\overline{W}(t)=0$; otherwise, this contradicts Lemma~\ref{lemma-allocate}. Hence, with Lemma~\ref{lemma-existence}, we conclude that the task $T_{i^{\prime}}$ in line 12 of Routine($\cdot$) exists.

Based on our notes in the description of AllocateRLM($\cdot$), we conclude that
\begin{proposition}\label{theo-structure}
Upon completion of AllocateRLM($i$, 1, $x$) where $x=t_{2}+1$, the final allocation to $\mathcal{S}^{\prime\prime}$ can guarantee that Property~\ref{proper-4} holds where $\mathcal{S}^{\prime\prime} = \mathcal{S}^{\prime} \cup \{T_{i}\}$.
\end{proposition}
\begin{proof}
Fully-Utilize($i$), Fully-Allocate($i$) and AllocateRLM($i$, $\eta_{1}$, $x$) never change the allocation at any slot in $[\tau_{m}+1, d]$. AllocateRLM($i$, 1, $x$) ends up with one of the following four events. The first event occurs when the condition in line 4 of Routine($\cdot$) is true; then, the proposition holds trivially since all the slots $\overline{t}\in [1, d_{i}]$ have been fully utilized, i.e., $\overline{W}(\overline{t})=0$. If the first event doesn't occur, the second event is that, for the first time, at some $t\in[t_{2}+1, d_{i}]$, $\sum_{\overline{t}=1}^{t-1}{y_{i}(\overline{t})}=0$; then, we have that, $T_{i}$ is fully allocated $D_{i}$ resource in $[t, d_{i}]\subseteq [t_{2}+1, d_{i}]$. The third event occurs when the condition in line 10 of Routine($\cdot$) is true. In the following, we will analyze the resource utilization state when either of the second and third events occurs.

Recall that $t^{\prime}$ is defined in line 2 of Routine($\cdot$) where each slot in $[t^{\prime}+1, d_{i}]$ will be fully utilized; when the second or third event occurs, all the slots in $[t^{\prime}+1, d_{i}]$ are fully utilized, i.e., $\overline{W}(\overline{t})=0$, for all $\overline{t}\in [t^{\prime}+1, d_{i}]$. Upon completion of the iteration of AllocateRLM($\cdot$) at $t$ when the third event occurs, or, at $t+1$ when the second event occurs, we have the following three points, in contrast to the allocation achieved just before executing Allocate-B($i$),
\begin{enumerate}
 \setlength\itemsep{0.1em}
\item [(\rmnum{1})] $t^{\prime}\in [1, t_{2}]$ and the allocation to the previous tasks $\mathcal{S}^{\prime}$ at every $\overline{t}\in [1, t^{\prime}-1]$ is still the allocation achieved before executing Allocate-B($i$);

\item [(\rmnum{2})] $\sum_{\overline{t}=1}^{t^{\prime}-1}{y_{i}(\overline{t})}=0$, i.e., the allocation to $T_{i}$ in $[1, t^{\prime}-1]$ is zero and $T_{i}$ is fully allocated $D_{i}$ resource in $[t^{\prime}, d_{i}]$;

\item [(\rmnum{3})] the allocation to $\mathcal{S}^{\prime}$ at $t^{\prime}$ is not decreased;

\item [(\rmnum{4})] the allocation to $T_{i}$ at $t^{\prime}$ does not change.
\end{enumerate}
Noticing the above resource allocation state in $[1, d_{i}]$ where $d_{i}=\tau_{m}$, since Property~\ref{proper-2} holds before executing Allocate-B($i$), we conclude that Property~\ref{proper-2} still holds upon its completion where $t_{0}=t^{\prime}$. Without loss of generality, assume that $t^{\prime}\in [\tau_{m^{\prime}-1}+1, \tau_{m^{\prime}}]$ for some $m^{\prime}\in [m]^{+}$. Then, all the slots in $[\tau_{m^{\prime}}+1, d_{i}]$ have been fully utilized and the allocation in $[\tau_{m}+1, d]$ does not change at all; hence, we have that every interval $[\tau_{l}+1, d]$, where $m^{\prime}\leq l\leq L$, is optimally utilized by $\mathcal{S}^{\prime}\cup\{T_{i}\}$ due to Property~\ref{proper-1}. Since the total allocation to $\mathcal{S}^{\prime}$ in $[1, \tau_{m^{\prime}-1}]$ isn't changed by Allocate-B($i$) if $m^{\prime}-1>0$, due to Property~\ref{proper-1}, the interval $[\tau_{m^{\prime}-1}+1, d]$ is still optimally utilized by $\mathcal{S}^{\prime}$ and the task $T_{i}$ is fully allocated $D_{i}$ resource in this interval; hence, it is still optimally utilized by $\mathcal{S}^{\prime}\cup\{T_{i}\}$. Further, every interval $[\tau_{l}+1, d]$ is also optimally utilized where $1\leq l\leq m^{\prime}-1$. Hence, the theorem holds.

If the first three events don't occur, the fourth event occurs upon completion of the iteration of AllocateRLM($\cdot$) at $t=t_{2}+1$, i.e., the last iteration. In this case, we have that the conditions in lines 4 and 10 of Routine($\cdot$) are always false where at each iteration of AllocateRLM($\cdot$) there always exists such $t^{\prime}$ (defined in line 2 of Routine($\cdot$) with $\overline{W}(t^{\prime})>0$); due to the current resource allocation state, we conclude that, at each of the slots in $[t_{2}+1, d_{i}]$, $T_{i}$ is allocated $k_{i}$ machines. Upon completion of AllocateRLM($\cdot$), there exists a $t^{\prime}$ defined in line 2 of Routine($\cdot$), and, let $t^{\prime\prime\prime}$ denote the earliest slot at which $y_{i}(t^{\prime\prime\prime})\neq 0$ where $t^{\prime\prime\prime} \leq  t^{\prime}$; then, similar to our conclusion in the second and third events, we have that
\begin{enumerate}
 \setlength\itemsep{0.1em}
\item [(\rmnum{1})] the first point here is the same as the first and third points in the last paragraph;
\item [(\rmnum{2})] $T_{i}$ is fully allocated $D_{i}$ resource in $[t^{\prime\prime\prime}, d_{i}]$;

\item [(\rmnum{3})] if $t^{\prime\prime\prime} > t^{\prime}$, the allocation to $T_{i}$ at each $\overline{t}\in[t^{\prime\prime\prime}+1, t^{\prime}]$ does not change and $y_{i}(\overline{t})=k_{i}$ due to Lemma~\ref{lemma-allocate}, and, the allocation to $T_{i}$ at $t^{\prime\prime\prime}$ is greater than zero.
\end{enumerate}
Similar to our analysis in the last paragraph for other events, we conclude that the proposition holds.
\end{proof}

Proposition~\ref{proposi-fully-allocate} and Proposition~\ref{theo-structure} finish to show that Allocate-B($i$) satisfies Property~\ref{proper-3} and Property~\ref{proper-4} and hence completes the proof of Proposition~\ref{property-task}. We finally analyze the time complexity of Allocate-B($i$).

\begin{lemma}\label{lemma-complexity}
The time complexity of Allocate-B($\cdot$) is $\mathcal{O}(n)$.
\end{lemma}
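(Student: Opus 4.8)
The plan is to bound separately the running time of each of the at most three sub-procedures composing Allocate-X($i$) --- Fully-Utilize($i$), (when $X = B$) Fully-Allocate($i$), and AllocateRLM($i$, $\eta_1$) --- and then add the bounds. Since $\eta_1$ merely selects which break condition is used inside Routine($\cdot$) and is otherwise irrelevant to the operation count, Allocate-A($i$) and Allocate-B($i$) can be handled by the same argument. Throughout I would invoke the standing assumption that $d$ and $k$ are bounded by constants, so that $d_i \le d$ and $k_i \le k$ are constants as well.

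The first step is to bound a single call to Routine($\Delta$, $\eta_1$, $\eta_2$) for a fixed slot $t$. Each pass of its while-loop either breaks immediately or executes the transfer in lines~11--12, which decrements $y_{i'}(t)$ and hence increments $\overline{W}(t)$ by exactly one; since the loop proceeds only while $\overline{W}(t) < \Delta$, at most $\Delta$ transfers occur, and in every caller $\Delta = \min\{k_i - y_i(t),\,\cdot\} \le k_i \le k$, so the number of transfers is $O(k) = O(1)$. For each transfer, finding the slot $t'$ costs $O(d) = O(1)$ given an array of the values $\overline{W}(\cdot)$, while finding a task $T_{i'}$ with $y_{i'}(t) > y_{i'}(t')$ --- whose existence is guaranteed by the analysis preceding this lemma --- costs $O(n)$ by scanning the tasks with positive allocation at $t$; this scan is the step that introduces the factor $n$. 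Hence one call to Routine($\cdot$) runs in $O(k(d+n)) = O(n)$.

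The second step is to bound the enclosing structure of each sub-procedure. Fully-Utilize($i$) is a single for-loop over the $d_i \le d$ slots with $O(1)$ body (keeping the array $W(\cdot)$ updated incrementally), hence $O(d) = O(1)$. Both Fully-Allocate($i$) and AllocateRLM($i$, $\eta_1$) run an outer loop whose counter $t$ strictly decreases each pass, so there are at most $d_i \le d$ passes; in each pass the dominant cost is the single call to Routine($\cdot$), i.e. $O(n)$, all remaining work --- computing $\Delta$, $\theta$, $\Omega$, locating $t''$, and the zeroing loop of lines~8--10 of AllocateRLM, which touches at most $t'' - 1 \le d_i$ slots --- being $O(d) = O(1)$. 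Therefore each of Fully-Allocate($i$) and AllocateRLM($i$, $\eta_1$) runs in $O(d \cdot n) = O(n)$, and summing the three bounds gives that Allocate-X($i$) runs in $O(n)$.

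I expect the only genuinely delicate point to be the iteration bound for the inner loop of Routine($\cdot$): one must argue that every non-breaking pass strictly increases $\overline{W}(t)$ and hence that the loop performs at most $\Delta \le k$ transfers, so that the only place a non-constant amount of work is done within one Allocate-X($i$) call is the $O(1)$-many task scans, each $O(n)$. Once $d$ and $k$ are treated as fixed constants, everything else is a routine accounting of nested-loop bounds.
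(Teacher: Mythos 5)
Your proof is correct and follows essentially the same route as the paper's: account for the nested loops of the sub-procedures, identify the $\mathcal{O}(n)$ scan for the task $T_{i'}$ inside Routine($\cdot$) as the only non-constant cost, and absorb everything else using the standing assumption that $d$ and $k$ (hence also $D\leq kd$) are bounded constants. Your per-call iteration bound for Routine($\cdot$) (at most $\Delta\leq k$ transfers, since each transfer increments $\overline{W}(t)$ by one) is in fact slightly tighter than the paper's bound of $D_i$ iterations per slot, but both collapse to the same $\mathcal{O}(n)$ conclusion.
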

\begin{proof}
See the Appendix for the proof.
\end{proof}

\begin{algorithm}[!ht]
\SetKwInOut{Begin}{Begin}
\SetKwInOut{Input}{Input}

Fully-Utilize($i$)\;

Fully-Allocate($i$)\;

AllocateRLM($i$, 1, $t_{2}+1$)\;

\caption{Allocate-B($i$)\label{Allocate-B}}
\end{algorithm}

Since LDF($\mathcal{S}$) considers a total of $n$ tasks, its complexity is $\mathcal{O}(n^{2})$ with Lemma~\ref{lemma-complexity}. Finally, we draw a main conclusion in this section from Lemma~\ref{boundary-condition} and Proposition~\ref{property-task}:

\begin{theorem}\label{main-theorem}
A set of tasks $\mathcal{S}$ can be feasibly scheduled and be completed by their deadlines on $C$ machines {\em if and only if} the boundary condition holds, where the feasible schedule of $\mathcal{S}$ could be produced by LDF($\mathcal{S}$) with a time complexity $\mathcal{O}(n^{2})$.
\end{theorem}

In other words, if LDF($\mathcal{S}$) cannot produce a feasible schedule for $\mathcal{S}$ on $C$ machines, then $\mathcal{S}$ cannot be successfully scheduled by any algorithm; as a result, LDF($\mathcal{S}$) is optimal. The relationships between the various algorithms of this paper are illustrated in Fig.~\ref{Fig.3} where GreedyRLM will be introduced in the next section. 

\begin{figure*}[!ht]
  \centering

  \includegraphics[width=4.3in]{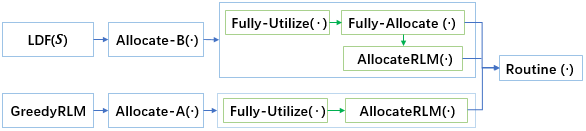}

  \caption{Relationship among Algorithms: for $A\rightarrow B$, the blue and green arrows denote the relations that the algorithm $A$ calls $B$, and, the algorithm $B$ is executed upon completion of $A$.}\label{Fig.3}
\end{figure*}

\vspace{0.2em}\noindent\textbf{Remarks.} We are inspired by the GreedyRTL algorithm \cite{Jain} in the construction of LDF($\cdot$). In terms of the two algorithms themselves, LDF($\cdot$) considers tasks in the decreasing order of deadlines while the order is determined by the marginal values in GreedyRTL($\cdot$). In both algorithms, the allocation to a task $T_{i}$ is considered from $d_{i}$ to 1 (once in GreedyRTL, and possibly three times in LDF($\cdot$)); to make time slots $t$ closest to the deadline of a task $T_{i}$ being considered fully utilized, the key operations are finding a time slot $t^{\prime}$ earlier than $t$ such that there exists a task $T_{i^{\prime}}$ with $y_{i^{\prime}}(t)>y_{i^{\prime}}(t^{\prime})$ when $\overline{W}(t)$, and transferring a part of the allocation of $T_{i^{\prime}}$ at $t$ to $t^{\prime}$. In GreedyRTL($\cdot$), the existence of $T_{i^{\prime}}$ requires that (\rmnum{1}) the number $\overline{W}(t^{\prime})$ of available machines at $t^{\prime}$ is $\geq k$ and (\rmnum{2})\footnote{The particular condition there is $\overline{W}(t) < \min\{ k_{i}, D_{i}-\sum_{\overline{t}=t+1}^{d_{i}}{y_{i}(\overline{t})} \}$.} $\overline{W}(t) < k_{i}$; as a result, before doing any allocation to $T_{i}$ at $t$, the existence could be proved by contradiction. In LDF($\cdot$), to achieve the optimality of resource utilization, one requirement for such existence is relaxed to be that the number of available machines at $t^{\prime}$ is $\geq 1$. The existence is guaranteed by (\rmnum{1}) first make every time slot from $d_{i}$ to 1 fully utilized, as what Fully-Utilize($i$) does, and (\rmnum{2}) a stepped-shape resource utilization state in $[1, d_{i}]$ upon completion of the allocation to the last task, as described in Property~\ref{proper-2}.


\section{Applications: Part \Rmnum{1}}
\label{more-application}

In this section, we illustrate the application of the results in Section~\ref{sec.optimal} to the greedy algorithm for social welfare maximization.


In terms of the maximization problem, {\em the general form of a greedy algorithm} is as follows \cite{Brassard,Even}: it tries to build a solution by iteratively executing the following steps until no item remains to be considered in a set of items: (1) selection standard: in a greedy way, choose and consider an item that is locally optimal according to a simple criterion at the current stage; (2) feasibility condition: for the item being considered, accept it if it satisfies a certain condition such that this item constitutes a feasible solution together with the tasks that have been accepted so far under the constraints of this problem, and reject it otherwise. Here, an item that has been considered and rejected will never be considered again. The selection criterion is related to the objective function and constraints, and is usually the ratio of 'advantage' to 'cost', measuring the efficiency of an item. In the problem of this paper, the constraint comes from the capacity to hold the chosen tasks and the objective is to maximize the social welfare; therefore, the selection criterion here is the ratio of the value of a task to its demand that will refer to as the {\em marginal value} of this task.

Given the general form of greedy algorithm, we define a class GREEDY of algorithms that operate as follows:
\begin{enumerate}
  \item Considers the tasks in the non-increasing order of the marginal value; assume without loss of generality that $v_{1}^{\prime}\geq v_{2}^{\prime}\geq  \cdots \geq v_{n}^{\prime}$;
  \item Denoting by $\mathcal{A}$ the set of the tasks that have been accepted so far, a task $T_{i}$ being considered is accepted and fully allocated {\em iff} there exists a feasible schedule for $\mathcal{A}\cup\{T_{i}\}$.
\end{enumerate}
In the following, we refer to the generic algorithm in GREEDY as {\em Greedy}.

\begin{proposition}\label{bound-greedy}
The best performance guarantee that a greedy algorithm in GREEDY can achieve is $\frac{s-1}{s}$.
\end{proposition}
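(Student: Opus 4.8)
The plan is to prove Proposition~\ref{bound-greedy} in two parts: an \emph{upper bound} showing no algorithm in GREEDY can beat $\frac{s-1}{s}$ (a matching instance), and a \emph{lower bound} showing every algorithm in GREEDY achieves at least $\frac{s-1}{s}$. For the upper bound I would construct a small adversarial instance where the slackness $s$ is tight for every task and the greedy selection order (non-increasing marginal value) forces the algorithm to accept a set of tasks whose total value is only $\frac{s-1}{s}$ times the optimum. The natural construction: take many tiny high-marginal-value tasks whose combined workload nearly fills the capacity $C\cdot d$ but whose deadlines are spread so that the boundary condition becomes tight, together with one (or a few) slightly-lower-marginal-value task(s) of large workload that the optimal solution would prefer; Greedy commits to the tiny tasks first and then cannot fit the big one, losing a $\frac{1}{s}$ fraction because the slackness leaves exactly $\frac{1}{s}$ of the time horizon "wasted" relative to what an optimal packing achieves. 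One must check that for this instance the optimum is (essentially) the whole capacity while Greedy gets a $\frac{s-1}{s}$ fraction, and that letting the number of tiny tasks grow makes the ratio approach $\frac{s-1}{s}$ exactly.

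For the lower bound I would argue as follows. Let $\mathcal{A}$ be the set Greedy accepts and let $\mathrm{OPT}$ be an optimal set; order all of $\mathcal{T}$ by non-increasing marginal value $v_i' = v_i/D_i$. Since a rejected task $T_i$ was rejected precisely because $\mathcal{A}'\cup\{T_i\}$ violated the boundary condition (where $\mathcal{A}'$ is the accepted prefix at that point), and since $\mathcal{A}\subseteq\mathcal{A}'$-consistent growth means the boundary-condition "deficit" only accumulates, I want to show: the total workload Greedy packs is at least $\frac{s-1}{s}$ of the total workload \emph{any} feasible schedule can pack on $C$ machines over $[1,d]$ — intuitively because whenever Greedy stalls, the machines are "almost full" in the sense guaranteed by the optimal-resource-utilization state (Lemma~\ref{lemma-parameter-maximum-resource} and the boundary condition), with the only slack being the unavoidable $\frac{1}{s}$ factor coming from the minimal execution length $len_i$ versus deadline $d_i$. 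Then a standard exchange/charging argument converts the workload guarantee into a value guarantee: because tasks are processed in non-increasing marginal-value order, each unit of workload Greedy devotes to an accepted task has marginal value at least that of any workload unit in $\mathrm{OPT}\setminus\mathcal{A}$ that it "displaces," so $\sum_{T_i\in\mathcal{A}} v_i \ge \frac{s-1}{s}\sum_{T_i\in\mathrm{OPT}} v_i$.

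The key enabler throughout is Theorem~\ref{main-theorem}: membership in GREEDY only requires the feasibility test "$\mathcal{A}\cup\{T_i\}$ admits a feasible schedule," and Theorem~\ref{main-theorem} tells us this is equivalent to the boundary condition and moreover that LDF actually realizes the optimal-utilization state $\lambda_m^C(\cdot)$. So the analysis can be carried out entirely in terms of the boundary condition and the $\lambda_m^C, \mu_m^C$ parameters, without reference to any particular feasibility-checking subroutine. The main obstacle, I expect, is the lower-bound workload argument: quantifying exactly how much machine-time is necessarily idle when Greedy gets stuck, and showing this idle fraction is bounded by $\frac{1}{s}$. This requires carefully relating, at the moment of the first rejection, the accumulated workload $\sum_{T_i\in\mathcal{A}}D_i$ to $\lambda_L^C(\mathcal{A})$ and to the capacity $C\cdot d$, using that every task has $len_i \le d_i/s$, i.e. $s_i \ge s$; the bookkeeping of how the "otherwise" branch in Definition~1 ($k_j(d_j-\tau_{L-m})$ instead of $D_j$) interacts with the slackness bound is where the $\frac{s-1}{s}$ factor precisely emerges, and getting that inequality clean — rather than with a weaker constant like the $\min\{\frac{s-1}{s},\frac{C-k+1}{C}\}$ bound of GreedyRTL — is the crux that distinguishes this optimal greedy from the earlier one.
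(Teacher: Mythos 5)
Your upper-bound half is essentially the paper's entire proof of this proposition: the paper exhibits exactly the instance you describe --- $C\cdot d_1'$ unit-workload tasks with marginal value $1+\epsilon$ and deadline $d_1'$, plus large tasks with marginal value $1$, parallelism bound $1$, workload $d_2'-d_1'+1$ and deadline $d_2'$ --- so that any member of GREEDY fills $[1,d_1']$ with the small tasks and must reject every large one, giving a ratio that tends to $d_1'/d_2'=\frac{s-1}{s}$ as $\epsilon\to 0$ and $d_2'\to\infty$. One small correction to your description: the slackness is not tight for every task in such a construction (the unit tasks have slackness $d_1'$); only the large tasks realize the minimum slackness $s=\frac{d_2'}{d_2'-d_1'+1}$, which is all that is needed. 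The paper scopes this proposition as only the impossibility direction; achievability of $\frac{s-1}{s}$ is a separate result (Proposition~\ref{proposition-GreedyRLM}) proved with considerably more machinery.

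Your lower-bound half, as sketched, has a genuine gap. The claim that the total workload Greedy packs is at least $\frac{s-1}{s}$ of the workload \emph{any} feasible schedule can pack on $C$ machines over all of $[1,d]$ is false in general: the $\frac{s-1}{s}$ utilization guarantee only holds on prefixes $[1,t_m^{th}]$ whose right endpoints are tied to the deadlines of \emph{rejected} tasks (Feature~\ref{utilization1}), while on the remaining intervals one needs the different, incomparable guarantee that the accepted tasks utilize those intervals \emph{optimally} in the sense of $\lambda_m^C(\cdot)$ (Feature~\ref{fully}); with no rejections the utilization can be arbitrarily low and the argument must not break. Likewise, the ``standard exchange/charging argument'' does not go through in one line, because an optimal solution may reschedule the accepted tasks themselves into the late intervals, so one cannot charge displaced OPT workload unit-for-unit against Greedy's workload globally. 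The paper's Theorem~\ref{spaa} handles this by slicing time at the thresholds $t_m^{th}$, relaxing rejected and future tasks (fractional value, parallelism bound $C$), and bounding $OPT^{[t_m^{th}+1,t_{m+1}^{th}]}$ interval by interval via Lemmas~\ref{slicing-connection2}--\ref{area-acquisition-interval-upper-bound}; some version of that decomposition is unavoidable, and your sketch does not supply it.
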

\begin{proof}
Let us consider a special instance: (\rmnum{1}) let $\mathcal{D}_{i}=\{T_{j}\in\mathcal{T} | d_{i}=d_{i}^{\prime} \}$, where $i\in [2]^{+}$, $d_{2}^{\prime}$ and $d_{1}^{\prime}\in \mathcal{Z}^{+}$, and $d_{2}^{\prime}>d_{1}^{\prime}$; (\rmnum{2}) for all $T_{j}\in\mathcal{D}_{1}$, $v_{j}^{\prime}=1+\epsilon$, $D_{j}=1$,
$k_{j}=1$, and, there is a total of $C\cdot d_{1}^{\prime}$ such tasks, where $\epsilon \in (0, 1)$ is small enough; (\rmnum{3}) for all $T_{j}\in\mathcal{D}_{2}$, $v_{j}^{\prime}=1$, $k_{j}=1$ and $D_{j}=d_{2}^{\prime}-d_{1}^{\prime}+1$. Greedy will always fully allocate resource to the tasks in $\mathcal{D}_{1}$, with all the tasks in $\mathcal{D}_{2}$ rejected to be allocated any resource. The performance guarantee of Greedy will be no more than $\frac{C\cdot d_{1}^{\prime}}{C\cdot[(1+\epsilon)(d_{1}^{\prime}-1)+1\cdot (d_{2}^{\prime}-d_{1}^{\prime}+1)]}$. Further, with $\epsilon\rightarrow 0$, this performance guarantee approaches $\frac{d_{1}^{\prime}}{d_{2}^{\prime}}$. In this instance, $s=\frac{d_{2}^{\prime}}{d_{2}^{\prime}-d_{1}^{\prime}+1}$ and $\frac{s-1}{s}=\frac{d_{1}^{\prime}-1}{d_{2}^{\prime}}$. When $d_{2}^{\prime}\rightarrow +\infty$, $\frac{d_{1}^{\prime}}{d_{2}^{\prime}}=\frac{s-1}{s}$. Hence, the proposition holds.
\end{proof}

\subsection{Notation}

Greedy will consider tasks sequentially. The first considered task will be accepted definitely and then it will use to the feasibility condition to determine whether or not to accept or reject the next task according to the current available resource and the characteristics of this task. To describe the process under which Greedy accepts or rejects tasks, we define the sets of consecutive accepted (i.e., fully allocated) and rejected tasks $\mathcal{A}_1, \mathcal{R}_1, \mathcal{A}_2, \cdots$. Specifically, let $\mathcal{A}_{m}=\{T_{i_{m}}, T_{i_{m}}, \cdots, T_{j_{m}}\}$ be the $m$-th set of the adjacent tasks that are accepted by Greedy where $i_{1}=1$ while $\mathcal{R}_{m}=\{ T_{j_{m}+1}, \cdots, T_{i_{m+1}-1} \}$ is the $m$-th set of the adjacent that are rejected tasks following the set $\mathcal{A}_{m}$, where $m\in[K]^{+}$ for some integer $K$.
Integer $K$ represents the last step: in the $K$-th step, $\mathcal{A}_{K} \neq \emptyset$ and $\mathcal{R}_{K}$ can be empty or non-empty. We also denote by $c_{m}$ the maximum deadline of all rejected tasks of $\cup_{l=1}^{m}{\mathcal{R}_{l}}$, i.e.,
\begin{center}
$c_{m}=\max_{T_{i}\in \cup_{l=1}^{m}{\mathcal{R}_{l}}}{\{ d_{i} \}}$,
\end{center}
and by $c_{m}^{\prime}$ the maximum deadline of $\cup_{l=1}^{m}{\mathcal{A}_{l}}$, i.e.,
\begin{center}
$c_{m}^{\prime}=\max_{T_{i}\in \cup_{l=1}^{m}{\mathcal{A}_{l}}}\{ d_{i} \}$.
\end{center}
While the tasks in $\mathcal{A}_{m}\cup\mathcal{R}_{m}$ are being considered, we refer to Greedy as being in the $m$-th phase. Before the execution of Greedy, we refer to it as being in the 0-th phase. Upon completion of the $m$-th phase of Greedy, we define a {\em threshold} parameter $\tth_m$ such that
\begin{enumerate}
 \setlength\itemsep{0.35em}
\item [(\rmnum{1})] if $c_m\ge c^{\prime}_m$, set $\tth_m = c_m$, and
\item [(\rmnum{2})] if $c_m< c^{\prime}_m$, set $\tth_m$ to any time slot in $[c_m, c^{\prime}_m]$.
\end{enumerate}
Here, $d_{i}\leq t_{m}^{th}$ for all $T_{i}\in\cup_{j=1}^{m}{\mathcal{R}_{j}}$. For ease of the subsequent exposition, we let $t_{0}^{th}=0$ and $t_{K+1}^{th}=d$. We also add a dummy time slot 0 but the task $T_{i}\in\mathcal{T}$ can not get any resource there, that is, $y_{i}(0)=0$ forever. We also let $\mathcal{A}_{0}=\mathcal{R}_{0}=\mathcal{A}_{K+1}=\mathcal{R}_{K+1}=\emptyset$. Besides the notation in Section~\ref{sec.model}, the additional key notation used for this section is also summarized in Table~\ref{table-3}.

\begin{table}
\centering
\begin{threeparttable}[!ht]

\caption{Main Notation for Section~\ref{more-application}}

\begin{tabular}{|C{1.1cm}|C{6.8cm}|}

\hline
   Notation & Explanation\\
\hline

$\mathcal{A}_1, \mathcal{R}_1,$ $\mathcal{A}_2,\cdots,$ $\mathcal{R}_{K}$ & the sets of consecutive accepted (i.e., fully allocated) and rejected tasks by Greedy where $\bigcup_{m=1}^{K}{\mathcal{A}_{m}\cup\mathcal{R}_{m}}=\mathcal{T}$ \\ \hline


$c_{m}$ & the maximum deadline of all rejected tasks of $\cup_{l=1}^{m}{\mathcal{R}_{l}}$  \\ \hline

$c_{m}^{\prime}$ & the maximum deadline of $\cup_{l=1}^{m}{\mathcal{A}_{l}}$ \\ \hline

$\tth_m$ & a threshold parameter such that (\rmnum{1}) if $c_m\ge c^{\prime}_m$, set $\tth_m = c_m$, and (\rmnum{2}) if $c_m< c^{\prime}_m$, set $\tth_m$ to any time slot in $[c_m, c^{\prime}_m]$; when introducing GreedyRLM, it will be set to a specific value  \\ \hline

\end{tabular}
\label{table-3}
 \end{threeparttable}
\end{table}

\subsection{A New Algorithmic Analysis}
\label{sec.algo.analysis}


We will show that as soon as the resource allocation done by Greedy satisfies some features, its performance guarantee can be deduced immediately, i.e., the main result of this subsection is Theorem~\ref{spaa}.

For all $m\in[K]^{+}$, upon completion of Greedy, we define the following two features that we want the allocation to $\cup_{j=1}^{m}{\mathcal{A}_{j}}$ to satisfies:
\begin{feature}\label{utilization1}
The total allocation to $\bigcup_{j=1}^{m}{\mathcal{A}_{j}}$ in $[1, t_{m}^{th}]$ is at least $r\cdot C\cdot t_{m}^{th}$, where $r\in[0,1]$.
\end{feature}

\begin{feature}\label{fully}
For each task $T_{i}\in \bigcup_{j=1}^{m}{\mathcal{A}_{j}}$, its maximum amount of demand that can be processed in each $\left[\tth_{l}+1, d \right]$ is processed where $m\leq l\leq K$, i.e.,
\begin{center}
$\sum\limits_{\overline{t}=\tth_{l}+1}^{d_{i}}{y_{i}(\overline{t})}=\min\left\{ D_{i}, k_{i}(d_{i}-\tth_{l}) \right\}$.
\end{center}
\end{feature}

\begin{theorem}\label{spaa}
If Greedy achieves a resource allocation structure that satisfies Feature~\ref{utilization1} and Feature~\ref{fully} for all $m\in [K]^{+}$, it gives an $r$-approximation to the optimal social welfare.
\end{theorem}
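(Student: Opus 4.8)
The plan is to compare the social welfare obtained by Greedy against the optimal value by a charging/accounting argument that localizes everything to the time intervals $(\tth_{m-1}, \tth_m]$. First I would fix an optimal solution $\mathcal{O}\subseteq\mathcal{T}$ with value $\mathrm{OPT}=\sum_{T_i\in\mathcal{O}}v_i$, and split the analysis slab by slab: for each $m\in[K]^+$, I want to show that the value Greedy collects from allocations lying in $(\tth_{m-1}, \tth_m]$ is at least $r$ times the value any feasible schedule (in particular $\mathcal{O}$) can place in that same slab, and then sum over $m$. The two Features are exactly what make this slab-wise comparison go through: Feature~\ref{fully} says the imaginary tasks $\{T_{K+1,i}^{[1,\tth_{m+1}]} : T_i\in\cup_{j\le m}\mathcal{A}_j\}$ optimally utilize $(\tth_m+1,\tth_{m+1}]$ in the sense of Section~\ref{sec.optimal-utilization}, so by Lemma~\ref{lemma-parameter-maximum-resource} no feasible schedule of the accepted tasks can do better there; Feature~\ref{utilization1} gives the quantitative density bound $\sum_{T_i\in\cup_{j\le m}\mathcal{A}_j}D_{K+1,i}^{[1,\tth_m]}\ge r\,C\,\tth_m$ that converts "optimally utilized" into "at least an $r$-fraction of capacity is turned into value."

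Next I would bound the value that $\mathcal{O}$ can extract. The key observation is the greedy selection order: Greedy processes tasks in non-increasing marginal value $v_i'=v_i/D_i$, so every task rejected in phases $1,\dots,m$ has marginal value at least that of any task not yet considered, and every task in $\mathcal{O}\setminus\cup_j\mathcal{A}_j$ was either rejected (hence has high marginal value and a deadline $\le \tth_m$ for the appropriate $m$, by the emphasized remark that $d_i\le\tth_m$ for $T_i\in\cup_{j\le m}\mathcal{R}_j$) or was never a candidate to improve things. The value $\mathcal{O}$ places in slab $(\tth_{m-1},\tth_m]$ is at most (marginal value of the cheapest accepted task whose deadline reaches into that slab) times $C(\tth_m-\tth_{m-1})$, because a feasible schedule uses at most $C$ machines per slot; meanwhile Greedy's accepted tasks, by Feature~\ref{fully}, already fill that slab's capacity as densely as any feasible schedule of accepted tasks could, and each unit there carries at least that same marginal value since accepted tasks in phase $m$ precede rejected ones in the ordering. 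Combining with the density bound $r$ from Feature~\ref{utilization1}, the value Greedy keeps from slab $m$ is at least $r$ times what $\mathcal{O}$ keeps there.

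Then I would sum: $\sum_{T_i\in\cup_j\mathcal{A}_j}v_i = \sum_m (\text{value in slab } m) \ge r\sum_m(\text{OPT's value in slab } m) = r\cdot\mathrm{OPT}$, using that the slabs $(\tth_0,\tth_1],\dots,(\tth_K,\tth_{K+1}]$ partition $[1,d]$ and that every unit of $\mathcal{O}$'s value is realized in some slot, hence in some slab. A subtlety to handle carefully is tasks in $\mathcal{O}$ that Greedy also accepted — these should be credited directly rather than double-counted, so I would first remove $\mathcal{O}\cap\cup_j\mathcal{A}_j$ from both sides and argue about $\mathcal{O}\setminus\cup_j\mathcal{A}_j$, whose tasks are all rejected tasks and therefore obey $d_i\le\tth_m$ for their phase, keeping them confined to the early slabs where the density-$r$ bound applies. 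The main obstacle I anticipate is making the slab-wise value comparison fully rigorous: one must argue that the optimal schedule cannot "cheat" by using capacity in slab $m$ for a high-value task whose deadline lies in a later slab — this is where Feature~\ref{fully} together with the structure of $\lambda_m^C(\cdot)$ and the definition of $\tth_m$ (which sits between $c_m$ and $c_m'$) must be invoked precisely, so that the imaginary tasks capture exactly the portion of each accepted task's allocation that is "committed" to $[1,\tth_m]$ and nothing about later slabs leaks in. Getting this bookkeeping exactly right, rather than the arithmetic, is the crux.
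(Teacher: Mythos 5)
Your slab decomposition and your reading of the two Features are aligned with the paper, but the charging scheme at the heart of your argument has a genuine gap: you charge the optimum's value in slab $(\tth_{m-1},\tth_m]$ to the value Greedy's accepted tasks physically place \emph{inside that same slab}, and that per-slab inequality is false in general. Feature~\ref{utilization1} only guarantees a \emph{cumulative} density of $r$ over the prefix $[1,\tth_m]$; an individual slab can be almost empty of accepted-task workload (e.g., when the accepted tasks' parallelism bounds make $\lambda^{C}$ of that slab small, which is still ``optimal utilization'' in the sense of Feature~\ref{fully}), while the relaxed optimum fills that slab to capacity $C$ with later-phase rejected tasks whose parallelism has been relaxed. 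In that situation Greedy's in-slab value can be an arbitrarily small fraction of the optimum's in-slab value, so the bound ``value in slab $m$ $\ge r\cdot$ OPT's value in slab $m$'' does not close. The paper avoids this by a \emph{global} carving argument (Lemma~\ref{area-acquisition-interval-upper-bound}): it extracts $K$ pairwise-disjoint areas $C_1,\dots,C_K$, each of size $r\cdot C\cdot(\tth_{m+1}-\tth_m)$, from the \emph{cumulative} allocation over $[1,\tth_{m+1}]$ rather than from the slab itself, each taken as the highest-marginal-value portion of what remains; surplus value packed into early slabs is thereby allowed to pay for the optimum's value in later slabs. The ordering by marginal value then shows the average marginal value of $C_{m+1}$ dominates that of the optimal slab schedule (Lemma~\ref{lemma-optimal-schedule}), giving $OPT^{[\tth_m+1,\tth_{m+1}]}\le V'_{m+1}/r$.

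A second, smaller issue: your upper bound on ``the value $\mathcal{O}$ places in a slab'' is not well defined for the original problem, where a task yields value only if fully completed by its deadline. The paper first constructs an explicit relaxation (partial execution yields proportional value; parallelism of non-accepted tasks is reset to $C$) whose optimum upper-bounds the true optimum, and only then proves the telescoping decomposition $OPT^{[1,\tth_{m+1}]}\le OPT^{[1,\tth_m]}+OPT^{[\tth_m+1,\tth_{m+1}]}$ (Lemma~\ref{slicing-connection2}) by an allocation-exchange argument that relies on that relaxation. You gesture at this (``every unit of $\mathcal{O}$'s value is realized in some slot'') but would need to make the relaxation explicit for the slab-wise bound to be legitimate. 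Finally, the last slab needs separate treatment: there the optimum equals Greedy's value exactly with no $1/r$ loss (Lemma~\ref{lemma-optimality-last-interval}), which is what lets the final sum telescope to $\sum_{T_i\in\cup_j\mathcal{A}_j}v_i/r$.
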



In the rest of Section~\ref{sec.algo.analysis}, we prove Theorem~\ref{spaa}; we will first provide an upper bound of the optimal social welfare.

\vspace{0.45em}\noindent\textbf{Proof Overview.} We refer to the original problem of scheduling $\mathcal{A}_1, \mathcal{R}_1, \cdots, \mathcal{A}_K, \mathcal{R}_{K}$ on $C$ machines to maximize the social welfare as \textbf{the MSW-\Rmnum{1} problem}.

In the following, we define a relaxed version of the MSW-\Rmnum{1} problem.
Assume that $\mathcal{R}_{m}^{\prime}$ consists of a single task $T_{m}^{\prime}$ whose deadline is $t_{m}^{th}$, whose size is infinite, and whose marginal value is the largest one of the tasks in $\mathcal{R}_{m}$, denoted by $\overline{v}_{m}^{\prime}$; here, different from the task in $\mathcal{R}_{m}$, we assume that there is no parallelism constraint on $T_{m}^{\prime}$ whose bound is $C$. In addition, partial execution of the task $T_{m}^{\prime}$ and the tasks of $\mathcal{A}_1, \cdots, \mathcal{A}_K$ can yield linearly proportional value, e.g., if a task $T_{i}\in \mathcal{A}_{l}$ is allocated $\sum_{t=1}^{d_{i}}{y_{i}(t)}<D_{i}$ resource by its deadline, a value $(\sum_{t=1}^{d_{i}}{y_{i}(t)}/D_{i})\cdot v_{i}$ will still be added to the social welfare. We refer to the problem of scheduling $\mathcal{A}_1, \mathcal{R}_1^{\prime}, \cdots, \mathcal{A}_K, \mathcal{R}_{K}^{\prime}$ on $C$ machines as \textbf{the MSW-\Rmnum{2} problem}.

\begin{lemma}\label{lemma-upper-bound-1}
The optimal social welfare of the MSW-\Rmnum{2} problem is an upper bound of the optimal social welfare of the MSW-\Rmnum{1} problem.
\end{lemma}
\begin{proof}
See the appendix for the detailed proof.
\end{proof}

Due to Feature~\ref{utilization1}, Feature~\ref{fully}, and the fact that the marginal value of $T_{m}^{\prime}$ is no larger than the ones of the tasks of $\cup_{l=1}^{m}{\mathcal{A}_{l}}$, we derive the following two lemmas:

\begin{lemma}\label{lemma-upper-bound-2}
The following schedule achieves an upper bound of the optimal social welfare of the MSW-\Rmnum{2} problem, ignoring the capacity constraint:
\begin{enumerate}

\item for all tasks of $\mathcal{A}_1, \cdots, \mathcal{A}_{K}$, their allocation is the same as the one achieved by Greedy with Features~\ref{utilization1} and~\ref{fully} satisfied;

\item for all $m\in [K]^{+}$, execute a part of task $T_{m}^{\prime}$ such that the amount of processed workload in $\left[t_{m-1}^{th}+1, t_{m}^{th}\right]$ is $(1-r)\cdot\left(t_{m}^{th}-t_{m-1}^{th}\right)\cdot C$.
\end{enumerate}
\end{lemma}
\begin{proof}
See the appendix for the detailed proof.
\end{proof}



\begin{lemma}\label{lemma-interval-value}
For all $m\in [K]^{+}$, the total value generated by executing the allocation to $T_{1}^{\prime}, \cdots, T_{m}^{\prime}$ is no larger than $\frac{1-r}{r}$ times the total value generated by the allocation to $\cup_{l=1}^{m}{\mathcal{A}_{l}}$ in $[1, t_{m}^{th}]$.
\end{lemma}
\begin{proof}
See the appendix for the detailed proof.
\end{proof}

In the case that $m=K$, the total value from $T_{1}^{\prime}, \cdots, T_{K}^{\prime}$ is no larger than $\frac{1-r}{r}$ times the total value from the allocation to $\cup_{l=1}^{K}{\mathcal{A}_{l}}$ in $[1, t_{K}^{th}]$. Hence, the total value generated by the schedule in Lemma~\ref{lemma-upper-bound-2} is no larger than $1+\frac{1-r}{r}=\frac{1}{r}$ times the total value generated by the allocation to all tasks of $\mathcal{A}_{1}, \cdots, \mathcal{A}_{K}$. By Lemmas~\ref{lemma-upper-bound-2} and~\ref{lemma-upper-bound-1}, Theorem~\ref{spaa} holds.


\subsection{Optimal Algorithm Design}

We now introduce the executing process of the optimal greedy algorithm GreedyRLM, presented as Algorithm~\ref{GreedyRLM}:
\begin{itemize}
 \setlength\itemsep{0.15em}
\item [(1)] considers the tasks in the non-increasing order of the marginal value.

\item [(2)] in the $m$-th phase, for a task $T_{i}$ being considered, if $\sum_{t\leq d_{i}}{\min\{\overline{W}(t), k_{i}\}}\geq D_{i}$, call Allocate-A($i$), presented as Algorithm~\ref{Allocate-A}, where the details on Fully-Utilize($i$) and AllocateRLM($i$, 0, $t_{m}^{th}+2$) can be found in Section~\ref{sec.phase-1} and Section~\ref{sec.phase-2}.

\item [(3)] if the allocation condition is not satisfied, set the threshold parameter $\tth_m$ of the $m$-th phase that is defined by lines 8-15 of Algorithm~\ref{GreedyRLM}.
\end{itemize}

\begin{algorithm}[!ht]
\SetKwInOut{Input}{Input}
\SetKwInOut{Output}{Output}



\Input{$n$ tasks with $type_{i}=\{ v_{i}, d_{i}, D_{i}, k_{j} \}$}
\Output{A feasible allocation of resources to tasks}
\BlankLine



initialize: $y_{i}(t)\leftarrow 0$ for all $T_{i}\in \mathcal{T}$ and $1\leq t\leq d$, $m=0$, $t_{m}^{th}=0$\;

sort tasks in the non-increasing order of the marginal values: $v_{1}^{\prime}\geq v_{2}^{\prime}\geq \cdots \geq v_{n}^{\prime}$\;

$i\leftarrow 1$\;
\While{$i\leq n$}{

  \If{$\sum_{t\leq d_{i}}{\min\{\overline{W}(t), k_{i}\}}\geq D_{i}$}{\nllabel{forins}
    Allocate-A($i$)\tcp*{\footnotesize{in the $(m+1)$-th phase}}
  }
  \Else{
    \If{$T_{i-1}$ has ever been accepted}{
      $m\leftarrow m+1$\tcp*{\footnotesize{in the $m$-th phase, the allocation to $\mathcal{A}_{m}$ was completed; the first rejected task is $T_{j_{m}}=T_{i}$}}
    }

    \While{$\sum_{t\leq d_{i+1}}{\min\{\overline{W}(t), k_{i+1}\}}< D_{i+1}$}{\nllabel{forins}

      $i\leftarrow i+1$\;
    }\tcc{\footnotesize{the last rejected task is $T_{i_{m+1}-1}=T_{i}$ and $\mathcal{R}_{m}=\{T_{j_{m}}, \cdots, T_{i_{m+1}-1}\}$}}

    \If{$c_m\ge c^{\prime}_m$}{

      $\tth_m \leftarrow c_m$\;
    }
    \Else{

      set $\tth_m$ to time slot just before the first time slot $t$ with $\overline{W}(t)>0$ after $c_m$ or to $c^{\prime}_m$ if there is no time slot $t$ with $\overline{W}(t)>0$ in $[c_m, c^{\prime}_m]$\;
    }
  }
  $i\leftarrow i+1$\;
}


\caption{GreedyRLM\label{GreedyRLM}}
\end{algorithm}

\begin{algorithm}[!ht]
\SetKwInOut{Begin}{Begin}
\SetKwInOut{Input}{Input}

Fully-Utilize($i$)\;

\If{$d_{i} \geq t_{m}^{th}+2$}{
   AllocateRLM($i$, 0, $t_{m}^{th}+2$) where $t_{2}=t_{1}$ that are defined in Section~\ref{sec.phase-1}\;
}

\caption{Allocate-A($i$)\label{Allocate-A}}
\end{algorithm}

When the condition in line 5 of GreedyRLM is true, every accepted task can be fully allocated $D_{i}$ resource using Fully-Utilize($i$). The reason for the existence of $T_{i^{\prime}}$ in Routine($\cdot$) is the same as the reason when introducing LDF($\mathcal{S}$) since $\overline{W}(t^{\prime})>0$.

\begin{proposition}\label{proposition-GreedyRLM}
GreedyRLM gives an $\frac{s-1}{s}$-approximation to the optimal social welfare with a time complexity of $\mathcal{O}(n^{2})$.
\end{proposition}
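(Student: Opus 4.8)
The plan is to derive Proposition~\ref{proposition-GreedyRLM} from Theorem~\ref{spaa}. Concretely, I will (a) verify that GreedyRLM is an instance of the class GREEDY of Proposition~\ref{bound-greedy}, i.e.\ that its test ``$\sum_{t\le d_i}\min\{\overline W(t),k_i\}\ge D_i$'' accepts $T_i$ exactly when $\mathcal A\cup\{T_i\}$ admits a feasible schedule; (b) show that the allocation it outputs satisfies Feature~\ref{utilization1} and Feature~\ref{fully} with $r=\tfrac{s-1}{s}$; and (c) bound the running time. Theorem~\ref{spaa} then yields the $\tfrac{s-1}{s}$-approximation (and with Proposition~\ref{bound-greedy} this is the best possible inside GREEDY, though only achievability is part of the statement).

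For (a): by the same induction as in the proof of Lemma~\ref{lemma-order-maximum}, now using the shared properties of Allocate-A (Lemma~\ref{lemma-unsaturated-forever}, Lemma~\ref{lemma-optimality}, and the Allocate-A versions of Lemma~\ref{lemma-order}), I claim that immediately after every call Allocate-A$(\cdot)$ the current accepted set $\mathcal A$ has \emph{optimally utilized} the $C$ machines --- it has been given $\lambda^{C}_{l}(\mathcal A)$ resources in each suffix $[\tau_{L-l}+1,d]$ --- and $\overline W(\cdot)$ is non-increasing on the relevant prefix. A rejection alters nothing, so this invariant also holds at the moment the test is evaluated for the next task $T_i$. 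In such a right-loaded state $\sum_{t\le d_i}\min\{\overline W(t),k_i\}$ is the maximum room that \emph{any} feasible schedule of $\mathcal A$ can leave for $T_i$ inside $[1,d_i]$ (loads of $\mathcal A$-tasks with deadline $\le d_i$ cannot leave $[1,d_i]$, and those with larger deadline are already pushed as far right as possible by AllocateRLM). Hence the test passes iff $\mathcal A\cup\{T_i\}$ meets the boundary condition, which by Theorem~\ref{main-theorem} is iff $\mathcal A\cup\{T_i\}$ is feasibly schedulable; and when it passes, Fully-Utilize$(i)$ already fully allocates $T_i$ and AllocateRLM$(i,0)$ only rearranges loads (Lemma~\ref{lemma-optimality}), so $T_i$ stays fully allocated and the optimal-utilization invariant is re-established.

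For (b): Feature~\ref{fully} is the restriction of the invariant from (a) to $[1,\tth_{m+1}]$: treating each $T_{K+1,i}^{[1,\tth_{m+1}]}$ as a real task carrying the allocation Greedy gave $T_i$ there, the sub-allocation of $\{T_{K+1,i}^{[1,\tth_{m+1}]}\mid T_i\in\cup_{j\le m}\mathcal A_j\}$ is still right-loaded, so $[\tth_m+1,\tth_{m+1}]$ is optimally utilized; here one checks that the definition of $\tth_m$ in lines~8--15 (namely $\tth_m+1$ is the first slot past $c_m$ carrying a free machine, or $\tth_m=c'_m$) is exactly what aligns the cut $[\tth_m+1,\tth_{m+1}]$ with the ``optimal up to $\tth_m$'' structure, every rejected task so far having deadline $\le c_m\le\tth_m$. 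For Feature~\ref{utilization1}, let $T^{*}$ be a task of $\cup_{j\le m}\mathcal R_j$ with $d_{T^{*}}=c_m$. Its rejection means $\sum_{t\le c_m}\min\{\overline W(t),k_{T^{*}}\}<D_{T^{*}}\le k_{T^{*}}\,len_{T^{*}}$, and $len_{T^{*}}=c_m/s_{T^{*}}\le c_m/s$. Splitting $\overline W(t)$ into $\min\{\overline W(t),k_{T^{*}}\}$ plus its excess over $k_{T^{*}}$ (at most $C-k_{T^{*}}$, and incurred on fewer than $c_m/s$ slots, since every slot with $\overline W(t)\ge k_{T^{*}}$ contributes $k_{T^{*}}$ to the bound just stated), one obtains $\sum_{t\le c_m}\overline W(t)< k_{T^{*}}c_m/s+(C-k_{T^{*}})c_m/s=Cc_m/s$, so the utilization over $[1,c_m]$ exceeds $\tfrac{s-1}{s}$; since $W(\cdot)$ never decreases (Lemma~\ref{lemma-unsaturated-forever}) this persists to the end of phase $m$, and $[c_m+1,\tth_m]$ is fully utilized by the choice of $\tth_m$, so the utilization over $[1,\tth_m]$ is still $\ge\tfrac{s-1}{s}$ (the degenerate case where no task is rejected is trivial, GreedyRLM being then exact).

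For (c): sorting costs $\mathcal O(n\log n)$; the main loop makes at most $n$ calls to Allocate-A, each $\mathcal O(n)$ by Lemma~\ref{lemma-complexity}, and all rejection bookkeeping (the inner while-loop, the deadline test, the threshold update) amounts to a constant number of passes over the $d=\mathcal O(1)$ slots per task, hence $\mathcal O(n)$ overall; the total is $\mathcal O(n^{2})$. I expect (a) to be the main obstacle: making fully rigorous that, thanks to AllocateRLM's new exit conditions (its right-loading), the single-deadline test is equivalent to full boundary-condition feasibility of $\mathcal A\cup\{T_i\}$ --- this is precisely where GreedyRLM improves on GreedyRTL, whose weaker exit condition leaves unsaturated ``gaps'' and only yields $\min\{\tfrac{s-1}{s},\tfrac{C-k+1}{C}\}$ --- together with the supporting induction that the optimal-utilization invariant survives each Allocate-A call.
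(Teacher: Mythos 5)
Your proposal follows essentially the same route as the paper: invoke Theorem~\ref{spaa} and verify Feature~\ref{utilization1} via the rejected-task counting bound (fewer than $len_{i}$ unsaturated-for-$T_i$ slots in $[1,c_m]$, giving utilization at least $\tfrac{s-1}{s}$, then full utilization on $[c_m+1,\tth_m]$) and Feature~\ref{fully} via the right-loading/optimal-utilization invariant of Allocate-A, with the $\mathcal{O}(n^2)$ bound from Lemma~\ref{lemma-complexity}. The only small caveat is that for the persistence of Feature~\ref{utilization1} you cite Lemma~\ref{lemma-unsaturated-forever} (which concerns the total workload $W(t)$), whereas what is needed — and what the paper argues via the exit condition $t'\le t_m^{th}$ in lines 9--11 of Routine — is that the allocation of $\cup_{j\le m}\mathcal{A}_j$ itself inside $[1,\tth_m]$ is never reduced by later calls; this is a minor imprecision, not a change of approach.
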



Now, we begin to prove Proposition~\ref{proposition-GreedyRLM}. The time complexity of Allocate-A($i$) depends on AllocateRLM($\cdot$). Using the time complexity analysis of AllocateRLM($\cdot$) in Lemma~\ref{lemma-complexity}, we get that AllocateRLM($\cdot$) has a time complexity of $\mathcal{O}(n)$, and, the time complexity of GreedyRLM is $\mathcal{O}(n^{2})$. Due to Theorem~\ref{spaa}, in the following, we only need to prove that Features~\ref{utilization1}~and~\ref{fully} holds in GreedyRLM where $r=\frac{s-1}{s}$, which is given in Propositions~\ref{proposition-utilization-2}~and~\ref{fully-2}.

The utilization of GreedyRLM is derived mainly by analyzing the resource allocation state when a task $T_{i}$ cannot be fully allocated (the condition in line 5 of GreedyRLM is not satisfied), and we have that
\begin{proposition}\label{proposition-utilization-2}
Upon completion of GreedyRLM, Feature~\ref{utilization1} holds in which $r=\frac{s-1}{s}$.
\end{proposition}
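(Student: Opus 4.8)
The plan is to prove, for each phase index $m\in[K]^{+}$, an end-of-phase version of the claim --- at the moment the threshold $\tth_m$ is fixed, the load of the accepted tasks inside $[1,\tth_m]$ exceeds $\frac{s-1}{s}\,C\,\tth_m$ --- and then to transfer it to the end of the run, which is the form Feature~\ref{utilization1} asks for.

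The first step is to reduce to the interval $[1,c_m]$, where $c_m=\max_{T_i\in\cup_{j\le m}\mathcal R_j}d_i$. By the way $\tth_m$ is set in lines~8--15 of Algorithm~\ref{GreedyRLM}, every slot of $(c_m,\tth_m]$ is fully utilized at the moment $\tth_m$ is fixed, so the idle mass there satisfies $\sum_{t\le\tth_m}\overline W(t)=\sum_{t\le c_m}\overline W(t)$. It therefore suffices to show $\sum_{t\le c_m}\overline W(t)<\frac{1}{s}\,C\,c_m$: the used mass in $[1,\tth_m]$ is then more than $C\,c_m-\frac{1}{s}\,C\,c_m+C(\tth_m-c_m)=\frac{s-1}{s}\,C\,c_m+C(\tth_m-c_m)$, which is $\ge\frac{s-1}{s}\,C\,\tth_m$ because $\frac{s-1}{s}\le 1$.

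The core estimate on $\sum_{t\le c_m}\overline W(t)$ comes from the rejected task $T^{*}$ attaining $d_{T^{*}}=c_m$. Having been rejected (in some phase $\le m$), $T^{*}$ failed its allocation condition, $\sum_{t\le c_m}\min\{\overline W(t),k_{T^{*}}\}<D_{T^{*}}$, and by Lemma~\ref{lemma-unsaturated-forever} the workloads never decrease afterwards, so this inequality still holds when $\tth_m$ is fixed. I would then split $\overline W(t)=\min\{\overline W(t),k_{T^{*}}\}+\max\{\overline W(t)-k_{T^{*}},0\}$ and sum over $t\le c_m$, calling the two resulting sums $\alpha$ and $\beta$. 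From $len_{T^{*}}\ge D_{T^{*}}/k_{T^{*}}$ and $s\le s_{T^{*}}=d_{T^{*}}/len_{T^{*}}$ one gets $\alpha<D_{T^{*}}\le k_{T^{*}}\,len_{T^{*}}\le\frac{k_{T^{*}}c_m}{s}$. For $\beta$, let $n_0$ count the slots $t\le c_m$ with $\overline W(t)>k_{T^{*}}$; each of them contributes $k_{T^{*}}$ to $\alpha$, so $n_0\,k_{T^{*}}\le\alpha<\frac{k_{T^{*}}c_m}{s}$, hence $n_0<\frac{c_m}{s}$, and since $\overline W(t)\le C$ always, $\beta\le n_0(C-k_{T^{*}})<\frac{c_m}{s}(C-k_{T^{*}})$. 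Adding, $\sum_{t\le c_m}\overline W(t)=\alpha+\beta<\frac{k_{T^{*}}c_m}{s}+\frac{(C-k_{T^{*}})c_m}{s}=\frac{C\,c_m}{s}$, precisely the bound needed above.

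Finally I would lift this to Feature~\ref{utilization1}. A later Allocate-A call affects an already-accepted task only through the transfers inside Routine, which move a task's load to an \emph{earlier} slot; so, for the fixed $\tth_m$, the quantity $\sum_{T_i\in\cup_{j\le m}\mathcal A_j}\sum_{t\le\tth_m}y_i(t)$ is non-decreasing after phase $m$. Since at the end of phase $m$ it equals $\sum_{t\le\tth_m}W(t)>\frac{s-1}{s}\,C\,\tth_m$, we obtain $\sum_{T_i\in\cup_{j\le m}\mathcal A_j}D_{K+1,i}^{[1,\tth_m]}\ge\frac{s-1}{s}\,C\,\tth_m$, i.e.\ Feature~\ref{utilization1} with $r=\frac{s-1}{s}$; degenerate cases (a phase not preceded by a rejection, so $c_m$ is inherited or $[1,\tth_m]$ is empty) are immediate. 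The step I expect to carry the real content is the $(\alpha,\beta)$ split: bounding the ``deeply idle'' part $\beta$ by separately counting the deep slots through $n_0<c_m/s$ is exactly what eliminates the extra $\frac{C-k}{C}$ factor of the GreedyRTL guarantee, whose analysis tolerates wasting up to $k$ machines at a slot with no analogous control.
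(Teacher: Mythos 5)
Your proposal is correct and follows essentially the same route as the paper: both arguments isolate the latest-deadline rejected task $T^{*}$ with $d_{T^{*}}=c_m$, use its failed allocation condition to split the idle capacity on $[1,c_m]$ into the part capped at $k_{T^{*}}$ (bounded by $D_{T^{*}}$) and the excess on the ``deep'' slots (bounded by their count, which is at most $len_{T^{*}}-1< c_m/s$, times $C-k_{T^{*}}$), and then absorb $(c_m,\tth_m]$ because it is fully utilized. The only cosmetic difference is in the persistence step --- you argue that Routine's transfers only move mass of previously accepted tasks to earlier slots so the prefix mass on $[1,\tth_m]$ cannot decrease, whereas the paper invokes the threshold-based exit condition of Routine; both are valid.
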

\begin{proof}
See the Appendix for the detailed proof.
\end{proof}


In GreedyRLM, when a task $T_{i}$ is accepted (lines 5 and 6), Allocate-A($\cdot$) is called to make it fully allocated. In Allocate-A($\cdot$), Fully-Utilize($\cdot$) and AllocateRLM($\cdot$) are sequentially called; both of them consider time slots $t$ from the deadline towards earlier ones: (\rmnum{1}) Fully-Utilize($\cdot$) makes $T_{i}$ utilize the remaining (idle) machines at $t$, and it does not change the allocations of the previous tasks; (\rmnum{2}) at every $t$, if $T_{i}$ does not utilize the maximum number of machines it can utilize (i.e., $y_{i}(t)<k_{i}$), AlloacteRLM($\cdot$) (a) transfers the allocations of the previous allocated tasks to an earlier slot that is closest to $t$ but not fully utilized (i.e., with idle machines), and (b) increases the allocation of $T_{i}$ at $t$ to the maximum (i.e., $k_{i}$) and, correspondingly reduce the equal allocations at the earliest slots, ensuring the total allocation to $T_{i}$ does not exceed $D_{i}$. Finally, upon completion of the whole execution of Allocate-A($\cdot$), we have that
\begin{itemize}
\item the number of allocated machines at each slot does not decrease,
\end{itemize}
in contrast to that amount just before executing Allocate-A($\cdot$). For every accepted task $T_{i}\in\mathcal{A}_{m}$, upon completion of Allocate-A($\cdot$), time slot $\tth_{m}+1$ is not fully utilized by the definition of $\tth_{m}$, i.e., $\overline{W}(\tth_{m}+1)>0$. Further, we have that whenever Allocate-A($\cdot$) completes the allocation to a previous task $T_{i}\in \mathcal{A}_{m^{\prime}}$ where $m^{\prime}<m$, $\tth_{m}+1$ is also not fully utilized then. Based on this, we draw the following conclusion.

\begin{lemma}\label{structure-2}
Due to the definition of $\tth_{m}$, we have for all $m\leq j\leq K$ that
\begin{itemize}
 \setlength\itemsep{0.25em}
\item [(1)] $[ \tth_{j}+1, d ]$ is optimally utilized by $\bigcup_{l=1}^{m}{\mathcal{A}_{l}}$ upon completion of the allocation to it using Allocate-A($i$);
\item [(2)] for the total amount of the allocations to $T_{i}$ in the interval $[ \tth_{j}+1, d ]$ just upon completion of Allocate-A($i$), it does not change upon completion of GreedyRLM.
\end{itemize}
\end{lemma}
\begin{proof}
We first prove the first point. Given a $m^{\prime}\in [m]^{+}$, for every $T_{i}\in\mathcal{A}_{m^{\prime}}$, upon completion of Allocate-A($i$), $\overline{W}(\tth_{j}+1)>0$ for all $j\in [m, K]$; based on this, we conclude that, in the case where $d_{i}\geq \tth_{j}+1$, either $\sum_{t=\tth_{j}+1}^{d_{i}}{y_{i}(t)}=D_{i}$ if $d_{i}-\tth_{j}>len_{i}$ or $y_{i}(t)=k_{i}$ for all $t\in [\tth_{j}+1, d_{i}]$ otherwise. The reason for this conclusion is similar to our analysis for the fourth event when proving Proposition~\ref{theo-structure}; here, there always exists a slot $\tth_{j}+1$ that is not fully utilized, i.e., $\overline{W}(\tth_{j}+1)>0$, leading to that the $t^{\prime}$ defined in line 2 of Routine($\cdot$) always exists where $\overline{W}(t^{\prime})>0$.


Now, we prove the second point in Lemma~\ref{structure-2}. For every $l\in [m^{\prime}, K]$, we observe the subsequent execution of Allocate-A($\cdot$) whose input is a task in $\mathcal{A}_{l}$ and could conclude that,
\begin{enumerate}
 \setlength\itemsep{0.2em}

\item upon its completion, the allocations to $T_{i}$ in $[1, \tth_{l}]$ are still the ones before executing Allocate-A($\cdot$);

\item Allocate-A($\cdot$) can only change the allocations of $T_{i}$ in the time range $[t_{l^{\prime}}+1, t_{l^{\prime}+1}]$ where $l^{\prime} \in [l, K]$ and the total amount of allocations in $[t_{l^{\prime}}+1, t_{l^{\prime}+1}]$ upon its completion is still the amount before its execution.
\end{enumerate}
As a result, we have that, upon completion of Allocate-A($i$), every subsequent execution of Allocate-A($\cdot$) never change the total amount of allocations of $T_{i}$ in $[t_{l^{\prime\prime}}+1, t_{l^{\prime\prime}+1}]$ for all $l^{\prime\prime}\in [m, K]$. 

In the following, it suffices to prove the above two points. In the execution of Allocate-A($\cdot$), Fully-Utilize($\cdot$) is first called and it does not change the allocation to the previous tasks; then, AllocateRLM($\cdot$, 0, $\tth_{l}$) is called in which only Routine($\cdot$) (i.e., its lines 12 and 13) in the step 3 can change the allocation to the previous tasks including $T_{i}$. In lines 12 and 13, a previous task $T_{i^{\prime}}$ is found to change its allocations at $t$ and $t^{\prime}$; here, $t^{\prime}$ is defined in lines 2 and 7 of Routine($\cdot$) and $\tth_{l}<t^{\prime}<t$. As a result, Allocate-A($\cdot$) cannot change the allocations of the previous tasks in $[1, \tth_{l}]$; for all $t\in [\tth_{l^{\prime}}+1, \tth_{l^{\prime}+1}]$ where $l^{\prime}\in [l, K]$, during the execution of the iteration of AllocateRLM($\cdot$) at $t$, we have $t^{\prime}>\tth_{l^{\prime}}$. Hence, the change to the allocations of the previous tasks can only happen in the interval $[\tth_{l^{\prime}}+1, \tth_{l^{\prime}+1}]$.
\end{proof}

From the first and second points of Lemma~\ref{structure-2}, we could conclude that
\begin{proposition}\label{fully-2}
Given a $m\in [1, K]$, $\left[ \tth_{l}+1, d \right]$ is optimally utilized by every task $T_{i} \in \bigcup_{j=1}^{m}{\mathcal{A}_{j}}$ for all $l\in [m, K]$.
\end{proposition}


\section{Applications: Part \Rmnum{2}}
\label{more-app-2}

In this section, we illustrate the applications of the results in Section~\ref{sec.optimal} to (\rmnum{1}) the dynamic programming technique for social welfare maximization, (\rmnum{2}) the machine minimization objective, and (\rmnum{3}) the objective of minimizing the maximum weighted completion time.

\subsection{Dynamic Programming}


For any solution, there must exist a feasible schedule for the tasks selected to be fully allocated by this solution. So, the set of tasks in an optimal solution satisfies the boundary condition by Lemma~\ref{boundary-condition}. Then, to find the optimal solution, we only need address the following problem: if we are given $C$ machines, how can we choose a subset $\mathcal{S}$ of tasks in $\mathcal{T}$ such that (\rmnum{1}) this subset satisfies the boundary condition, and (\rmnum{2}) no other subset of selected tasks achieves a better social welfare? This problem can be solved via dynamic programming (DP). To propose a DP algorithm, we need to identify a dominant condition for the model of this paper \cite{Williamson}. Let $\mathcal{F}\subseteq\mathcal{T}$ and recall that the notation $\lambda_{m}^{C}(\mathcal{F})$ in Section~\ref{sec.optimal-utilization}. Now, we define a $L$-dimensional vector
\begin{center}
$H(\mathcal{F})=(\lambda_{1}^{C}(\mathcal{F})-\lambda_{0}^{C}(\mathcal{F}), \cdots, \lambda_{L}^{C}(\mathcal{F})-\lambda_{L-1}^{C}(\mathcal{F}))$,
\end{center}
where $\lambda_{m}^{C}(\mathcal{F})-\lambda_{m-1}^{C}(\mathcal{F})$, $m\in[L]^{+}$, denotes the optimal resource that $\mathcal{F}$ can utilize on $C$ machines in the segmented timescale $[\tau_{L-m}+1, \tau_{L-m+1}]$ after $\mathcal{F}$ has utilized $\lambda_{m-1}^{C}(\mathcal{F})$ resource in $[\tau_{L-m+1}+1, \tau_{L}]$. Let $v(\mathcal{F})$ denote the total value of the tasks in $\mathcal{F}$ and then we introduce the notion of one pair $(\mathcal{F}, v(\mathcal{F}))$ {\em dominating} another $(\mathcal{F}^{\prime}, v(\mathcal{F}^{\prime}))$ if $H(\mathcal{F})=H(\mathcal{F}^{\prime})$ and $v(\mathcal{F})\geq v(\mathcal{F}^{\prime})$, that is, the solution to our problem indicated by $(\mathcal{F}, v(\mathcal{F}))$ uses the same amount of resources as $(\mathcal{F}^{\prime}, v(\mathcal{F}^{\prime}))$, but obtains at least as much value.

\begin{algorithm}
\SetKwInOut{Begin}{Begin}
\SetKwInOut{Output}{Output}
\BlankLine
\BlankLine

$\mathcal{F}\leftarrow\{T_{1}\}$\;

$A(1)\leftarrow \{ (\emptyset, 0), (\mathcal{F}, v(\mathcal{F})) \}$\;

\For{$i\leftarrow 2$ \KwTo $n$}{

  $A(j)\leftarrow A(i-1)$\;

  \For{each $(\mathcal{F}, v(\mathcal{F}))\in A(i-1)$}{
    \If{$\{T_{i}\}\cup\mathcal{F}$ satisfies the boundary condition}{
      \If{there exist a pair $(\mathcal{F}^{\prime}, v(\mathcal{F}^{\prime}))\in A(i)$ so that (1) $H(\mathcal{F}^{\prime})=H(\mathcal{F}\cup\{T_{i}\})$, and (2) $v(\mathcal{F}^{\prime})\geq v(\mathcal{F}\cup\{T_{i}\})$}{
        Add $(\{T_{i}\}\cup\mathcal{F}, v(\{T_{i}\}\cup\mathcal{F}))$ to $A(i)$\;
        Remove the dominated pair $(\mathcal{F}^{\prime}, v(\mathcal{F}^{\prime}))$ from $A(i)$\;
      }
      \Else{
        Add $(\{T_{i}\}\cup\mathcal{F}, v(\{T_{i}\}\cup\mathcal{F}))$ to $A(i)$\;
      }
    }
  }
}
return $\arg\max_{(\mathcal{F}, v(\mathcal{F}))\in A(n)}\{v(\mathcal{F})\}$\;
\caption{DP($\mathcal{T}$)\label{Recurrence}}
\end{algorithm}

We now give the general DP procedure DP($\mathcal{T}$), also presented as Algorithm~\ref{Recurrence} \cite{Williamson}. Here, we iteratively construct the lists $A(i)$ for all $i\in[n]^{+}$. Each $A(i)$ is a list of pairs $(\mathcal{F}, v(\mathcal{F}))$, in which $\mathcal{F}$ is a subset of $\{T_{1}, T_{2}, \cdots, T_{i}\}$ satisfying the boundary condition and $v(\mathcal{F})$ is the total value of the tasks in $\mathcal{F}$. Each list only maintains all the dominant pairs. Specifically, we start with $A(1)=\{(\emptyset, 0), (\{T_{1}\}, v_{1})\}$. For each $i=2,\cdots,n$, we first set $A(i)\leftarrow A(i-1)$, and for each $(\mathcal{F}, v(\mathcal{F}))\in A(i-1)$, we add $(\mathcal{F}\cup\{T_{i}\}, v(\mathcal{F}\cup\{T_{i}\}))$ to the list $A(i)$ if $\mathcal{F}\cup\{T_{i}\}$ satisfies the boundary condition. We finally remove from $A(i)$ all the dominated pairs.
DP($\mathcal{T}$) will select a subset $\mathcal{S}$ of $\mathcal{T}$ from all pairs $(\mathcal{F}, v(\mathcal{F}))\in A(n)$ so that $v(\mathcal{F})$ is maximum.


\begin{proposition}\label{selection}
DP($\mathcal{T}$) outputs a subset $\mathcal{S}$ of $\mathcal{T}=\{T_{1},$ $\cdots, T_{n}\}$ such that $v(\mathcal{S})$ is the maximum value subject to the condition that $\mathcal{S}$ satisfies the boundary condition; the time complexity of DP($\mathcal{T}$) is $\mathcal{O}(nd^{L}C^{L})$.
\end{proposition}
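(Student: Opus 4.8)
The plan is the standard knapsack-style dominance argument for correctness, together with a counting bound on the number of distinct states for the complexity. The crux is a \emph{sufficient-statistic} lemma: for disjoint $\mathcal{F},\mathcal{G}\subseteq\mathcal{T}$ and every $m\in[L]^{+}$, the quantity $\lambda_{m}^{C}(\mathcal{F}\cup\mathcal{G})$ depends on $\mathcal{F}$ only through $H(\mathcal{F})$. I would prove this from the closed form obtained by unrolling the recursion of Definition~2, namely $\lambda_{m}^{C}(\mathcal{S})=\min_{0\le k\le m}\{\lambda_{k}(\mathcal{S})+C(\tau_{L-k}-\tau_{L-m})\}$ with $\lambda_{0}(\mathcal{S})=0$, together with the additivity $\lambda_{k}(\mathcal{F}\cup\mathcal{G})=\lambda_{k}(\mathcal{F})+\lambda_{k}(\mathcal{G})$ and the monotonicity of $k\mapsto\lambda_{k}(\cdot)$, both immediate from Definition~1. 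The point is that in $\min_{k}\{\lambda_{k}(\mathcal{F})+\lambda_{k}(\mathcal{G})+C(\tau_{L-k}-\tau_{L-m})\}$ one may replace each $\lambda_{k}(\mathcal{F})$ by $\lambda_{k}^{C}(\mathcal{F})$ without changing the minimum: if $\lambda_{k}^{C}(\mathcal{F})=\lambda_{l}(\mathcal{F})+C(\tau_{L-l}-\tau_{L-k})$ for the minimizing $l\le k$, then the candidate index $l$ yields $\lambda_{l}(\mathcal{F})+\lambda_{l}(\mathcal{G})+C(\tau_{L-l}-\tau_{L-m})\le\lambda_{k}^{C}(\mathcal{F})+\lambda_{k}(\mathcal{G})+C(\tau_{L-k}-\tau_{L-m})$ using $\lambda_{l}(\mathcal{G})\le\lambda_{k}(\mathcal{G})$. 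I expect this lemma to be the main obstacle; everything after it is bookkeeping. Two consequences will be used: (a) $H(\mathcal{F}\cup\{T_{j}\})$ is computable from $H(\mathcal{F})$ and the characteristics of $T_{j}$; and (b) since the $m=0$ instance of the boundary condition forces $\lambda_{L}^{C}(\mathcal{S})=\sum_{T_{i}\in\mathcal{S}}D_{i}$ whenever $\mathcal{S}$ satisfies it, and $\lambda_{L}^{C}(\mathcal{F})$ equals the sum of the coordinates of $H(\mathcal{F})$, whether $\mathcal{F}\cup\mathcal{G}$ satisfies the boundary condition is determined by $H(\mathcal{F})$ and $\mathcal{G}$ as soon as $\mathcal{F}$ does.

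Next I would set up the dominance. Say $(\mathcal{F},v(\mathcal{F}))$ \emph{dominates} $(\mathcal{F}',v(\mathcal{F}'))$ when $H(\mathcal{F})=H(\mathcal{F}')$ and $v(\mathcal{F})\ge v(\mathcal{F}')$, with both sets satisfying the boundary condition. By consequence (b), dominance is preserved under any common extension $\mathcal{G}$ by later tasks: $\mathcal{F}\cup\mathcal{G}$ satisfies the boundary condition iff $\mathcal{F}'\cup\mathcal{G}$ does, and $v(\mathcal{F}\cup\mathcal{G})=v(\mathcal{F})+v(\mathcal{G})\ge v(\mathcal{F}'\cup\mathcal{G})$. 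I would also record that removing a task from a boundary-condition-feasible set leaves it feasible: deleting $T_{j}$ lowers $\sum_{i}D_{i}$ by $D_{j}$ and lowers $\lambda_{L-m}^{C}$ by at most $\lambda_{L-m}(\{T_{j}\})\le D_{j}$, so no $\mu_{m}^{C}$ increases. Then I would show by induction on $j$ that every boundary-condition-feasible $\mathcal{S}'\subseteq\{T_{1},\dots,T_{j}\}$ is dominated by some pair in $A(j)$. If $T_{j}\notin\mathcal{S}'$ this follows from $A(j)\supseteq A(j-1)$, noting that removal of dominated pairs still leaves a surviving pair with the same $H$ and at least the same value. If $T_{j}\in\mathcal{S}'$, the induction hypothesis applied to $\mathcal{S}'\setminus\{T_{j}\}$ (feasible by the removal fact) yields $(\mathcal{F},v(\mathcal{F}))\in A(j-1)$ dominating it; extension-preservation with $\mathcal{G}=\{T_{j}\}$ makes $\mathcal{F}\cup\{T_{j}\}$ feasible with $H(\mathcal{F}\cup\{T_{j}\})=H(\mathcal{S}')$ by consequence~(a), so $(\mathcal{F}\cup\{T_{j}\},v(\mathcal{F})+v_{j})$ is added to $A(j)$ and dominates $(\mathcal{S}',v(\mathcal{S}'))$. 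Taking $j=n$: a maximum-value pair of $A(n)$ has value at least that of the optimal boundary-condition-feasible subset by the claim, and at most that value because every $\mathcal{F}$ appearing in $A(n)$ is itself feasible; hence DP($\mathcal{T}$) returns an optimal such subset, proving the first assertion.

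For the running time, observe that the $m$-th coordinate of $H(\mathcal{F})$ is an integer in $\{0,1,\dots,C(\tau_{L-m+1}-\tau_{L-m})\}$, so each list $A(j)$, once dominated pairs are removed, contains at most $\prod_{m=1}^{L}\left(C(\tau_{L-m+1}-\tau_{L-m})+1\right)$ pairs. Since consecutive $\tau$'s differ by at least $1$, each factor is at most $(C+1)(\tau_{L-m+1}-\tau_{L-m})$, and applying the AM--GM inequality to the $L$ gaps whose sum is $\tau_{L}=d$ gives $\prod_{m}(\tau_{L-m+1}-\tau_{L-m})\le(d/L)^{L}\le d^{L}$; hence $|A(j)|=\mathcal{O}((Cd)^{L})=\mathcal{O}(d^{L}C^{L})$. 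Because $L\le d$ is bounded by a constant, the per-pair work of forming $\mathcal{F}\cup\{T_{j}\}$ --- computing its $H$ from the closed form above and testing the boundary condition --- is $\mathcal{O}(1)$, and deduplicating a list with an array indexed by $H$ costs $\mathcal{O}(|A(j)|)$. Summing over $j=1,\dots,n$ and adding the final scan of $A(n)$ gives total time $\mathcal{O}(nd^{L}C^{L})$, which completes the proof.
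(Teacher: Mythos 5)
Your proof is correct and follows the same skeleton as the paper's: the standard knapsack-style induction showing that each list $A(j)$ retains a dominating pair for every feasible subset of $\{T_1,\dots,T_j\}$, followed by a count of the state space of $H$. The substantive difference is that you identify and prove the lemma on which the whole dominance argument actually rests --- the closed form $\lambda_{m}^{C}(\mathcal{S})=\min_{0\le k\le m}\{\lambda_{k}(\mathcal{S})+C(\tau_{L-k}-\tau_{L-m})\}$ and the resulting fact that $H(\mathcal{F})$ is a sufficient statistic for computing $\lambda_{m}^{C}(\mathcal{F}\cup\mathcal{G})$ and for testing the boundary condition of any extension $\mathcal{F}\cup\mathcal{G}$. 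The paper's proof simply asserts, by analogy with the knapsack problem, that the pair $(\mathcal{F}_1\cup\{T_j\},\cdot)$ dominates $(\mathcal{F}',\cdot)$ and that it is added to $A(j)$; both claims require exactly your sufficient-statistic lemma (to conclude $H(\mathcal{F}_1\cup\{T_j\})=H(\mathcal{F}'_1\cup\{T_j\})$ and that feasibility of the extension is preserved), and the paper also silently applies the induction hypothesis to $\mathcal{F}'\setminus\{T_j\}$ without checking it is feasible, which your removal argument ($\lambda_{L-m}^{C}$ drops by at most $D_j$ upon deleting $T_j$) supplies. Your exchange argument replacing $\lambda_k(\mathcal{F})$ by $\lambda_k^C(\mathcal{F})$ inside the minimum, using monotonicity of $k\mapsto\lambda_k(\mathcal{G})$, is sound, and your state-space count is a slightly sharper version of the paper's $(C\cdot d)^{L}$ bound. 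In short: same route, but you make explicit the one nontrivial structural fact that the paper leaves implicit, which strengthens rather than departs from the published argument.
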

\begin{proof}
The proof is similar to the one in the knapsack problem \cite{Williamson}. By induction, we need to prove that $A(i)$ contains all the non-dominated pairs corresponding to feasible sets $\mathcal{F}\in\{T_{1}, \cdots, T_{i}\}$. When $i=1$, the proposition holds obviously. Now suppose it hold for $A(i-1)$. Let $\mathcal{F}^{\prime}\subseteq \{T_{1}, \cdots, T_{i}\}$ and $H(\mathcal{F}^{\prime})$ satisfies the boundary condition. We claim that there is some pair $(\mathcal{F}, v(\mathcal{F}))\in A(i)$ such that $H(\mathcal{F})=H(\mathcal{F}^{\prime})$ and $v(\mathcal{F})\geq v(\mathcal{F}^{\prime})$. First, suppose that $T_{i}\notin\mathcal{F}^{\prime}$. Then, the claim follows by the induction hypothesis and by the fact that we initially set $A(i)$ to $A(i-1)$ and removed dominated pairs. Now suppose that $T_{i}\in\mathcal{F}^{\prime}$ and let $\mathcal{F}_{1}^{\prime}=\mathcal{F}^{\prime}-\{T_{i}\}$.  By the induction hypothesis there is some $(\mathcal{F}_{1}, v(\mathcal{F}_{1}))\in A(i-1)$ that dominates $(\mathcal{F}_{1}^{\prime}, v(\mathcal{F}_{1}^{\prime}))$. Then, the algorithm will add the pair $(\mathcal{F}_{1}\cup\{T_{i}\}, v(\mathcal{F}_{1}\cup\{T_{i}\}))$ to $A(i)$. Thus, there will be some pair $(\mathcal{F}, v(\mathcal{F}))\in A(i)$ that dominates $(\mathcal{F}^{\prime}, v(\mathcal{F}^{\prime}))$. Since the size of the space of $H(\mathcal{F})$ is no more than $(C\cdot T)^{L}$, the time complexity of DP($\mathcal{T}$) is $nd^{L}C^{L}$. 
\end{proof}

\begin{proposition}\label{proposition-DP}
Given the subset $\mathcal{S}$ output by DP($\mathcal{T}$), LDF($\mathcal{S}$) gives an optimal solution to the welfare maximization problem with a time complexity $\mathcal{O}(\max\{nd^{L}C^{L}, n^{2}\})$.
\end{proposition}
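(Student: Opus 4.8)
The plan is to assemble the proposition from three facts already in hand: (i) the boundary condition is \emph{necessary} for schedulability (Lemma~\ref{boundary-condition}); (ii) the boundary condition is \emph{sufficient} for LDF($\mathcal{S}$) to output a feasible schedule (Theorem~\ref{main-theorem}, proved via Proposition~\ref{property-task}); and (iii) DP($\mathcal{T}$) returns, among all subsets of $\mathcal{T}$ satisfying the boundary condition, one of maximum total value (Proposition~\ref{selection}). The argument is therefore a reduction of ``optimal welfare'' to ``maximum value over boundary-condition-feasible sets'', followed by a call to LDF to turn the selected set into an explicit schedule.

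First I would show that the optimal social welfare equals $\max\{v(\mathcal{F}) : \mathcal{F}\subseteq\mathcal{T}\text{ satisfies the boundary condition}\}$. For the ``$\le$'' direction: any solution to the welfare maximization problem fully allocates, by their deadlines, the tasks of some subset $\mathcal{S}^{\star}\subseteq\mathcal{T}$ without violating parallelism or capacity, i.e.\ it is a feasible schedule of $\mathcal{S}^{\star}$; hence $\mathcal{S}^{\star}$ satisfies the boundary condition by Lemma~\ref{boundary-condition}, so $v(\mathcal{S}^{\star})$ is at most the right-hand side. For the ``$\ge$'' direction: if $\mathcal{F}$ satisfies the boundary condition, then by Theorem~\ref{main-theorem} LDF($\mathcal{F}$) produces a feasible schedule of $\mathcal{F}$, a legitimate solution collecting exactly $v(\mathcal{F})$; so the right-hand side is achievable. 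Then, invoking Proposition~\ref{selection}, the set $\mathcal{S}$ output by DP($\mathcal{T}$) satisfies the boundary condition and attains this maximum, hence $v(\mathcal{S})$ equals the optimal social welfare; and since $\mathcal{S}$ satisfies the boundary condition, LDF($\mathcal{S}$) produces an explicit feasible schedule of $\mathcal{S}$, which together with the value identity is an optimal solution.

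For the running time, DP($\mathcal{T}$) costs $\mathcal{O}(nd^{L}C^{L})$ by Proposition~\ref{selection} and LDF($\mathcal{S}$) costs $\mathcal{O}(n^{2})$ by Proposition~\ref{property-task} (which rests on Lemma~\ref{lemma-complexity}); run in sequence the total is $\mathcal{O}(nd^{L}C^{L}+n^{2})=\mathcal{O}(\max\{nd^{L}C^{L},n^{2}\})$.

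The main point requiring care is conceptual rather than computational, and it has already been discharged earlier: one must be certain that ``satisfies the boundary condition'' is \emph{exactly} equivalent to ``admits a feasible schedule''. The necessity half (Lemma~\ref{boundary-condition}) ensures DP does not overcount by retaining sets that could never be scheduled, while the sufficiency half (Theorem~\ref{main-theorem}) ensures DP does not undercount by discarding sets that are in fact schedulable; without both halves the DP would not be solving the right optimization problem. A secondary point worth noting is that the dominance relation used by DP is value-preserving in the sense needed—$H(\mathcal{F})$ captures precisely the information about $\mathcal{F}$ that determines whether the boundary condition persists after adding further tasks, so keeping only dominant pairs loses no optimal set—but this is exactly the content of the induction in Proposition~\ref{selection}, so no new work is needed here. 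Hence, modulo the cited results, the proposition follows immediately.
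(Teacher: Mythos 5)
Your proposal is correct and follows the same route as the paper, which simply cites Propositions~\ref{selection} and~\ref{property-task} (with the necessity direction from Lemma~\ref{boundary-condition} already noted in the paper's preamble to the DP section). You have merely spelled out the equivalence between ``satisfies the boundary condition'' and ``admits a feasible schedule'' that the paper leaves implicit; no new ideas or gaps.
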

\begin{proof}
It follows from Propositions~\ref{selection}~and~\ref{property-task}.
\end{proof}

\noindent\textit{Remark.} As in the knapsack problem \cite{Williamson}, to construct the algorithm DP($\mathcal{T}$), the pairs of the possible state of resource utilization and the corresponding best social welfare have to be maintained and a $L$-dimensional vector has to be defined to indicate the resource utilization state. This seems to imply that we cannot make the time complexity of a DP algorithm polynomial in $L$.

\subsection{Machine Minimization}


Given a set of tasks $\mathcal{T}$, the minimal number of machines needed to produce a feasible schedule of $\mathcal{T}$ is exactly the minimum $C^{*}$ such that the boundary condition is satisfied, by Theorem~\ref{main-theorem}, where the feasible schedule could be produced with a time complexity $\mathcal{O}(n^{2})$. An upper bound of the minimum $C^{*}$ is $k\cdot n$ and this minimum $C^{*}$ can be obtained through a binary search procedure with a time complexity of $\log{(k\cdot n)}=\mathcal{O}(\log{n})$; the corresponding algorithm is presented as Algorithm~\ref{minimization}.

\begin{lemma}\label{complexity-1}
In each iteration of the binary search procedure, the time complexity of determining the satisfiability of boundary condition (line 4 of Algorithm~\ref{minimization}) is $\mathcal{O}(L\cdot n)$ where $L\leq n$.
\end{lemma}
\begin{proof}
See the Appendix for the proof. 
\end{proof}

With Lemma~\ref{complexity-1}, the loop of Algorithm~\ref{minimization} has a complexity $\mathcal{O}(L\cdot n\cdot \log{n})$. Based on the above discussion, we conclude that

\begin{proposition}\label{machine-minimization}
Algorithm~\ref{minimization} produces an exact algorithm for the machine minimization problem with a time complexity of $\mathcal{O}(n^{2}, L\cdot n\cdot \log{n})$.
\end{proposition}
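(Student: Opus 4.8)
The plan is to reduce machine minimization to a monotone one-dimensional search and then certify feasibility with LDF. First I would note that, for a fixed number of machines $C$, a feasible schedule of the whole set $\mathcal{T}$ on $C$ machines exists if and only if $\mathcal{T}$ satisfies the boundary condition at that $C$: necessity is Lemma~\ref{boundary-condition} with $\mathcal{S}=\mathcal{T}$, and sufficiency is Proposition~\ref{property-task}, which in addition produces such a schedule by a single run of LDF($\mathcal{T}$). Hence the minimum number of machines $C^{*}$ is exactly the least $C$ for which $\mu_{m}^{C}(\mathcal{T})\le C\cdot\tau_{m}$ holds for every $m\in[L]$, and once $C^{*}$ has been located a witnessing schedule is obtained by calling LDF($\mathcal{T}$) on $C^{*}$ machines.

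To justify searching for $C^{*}$ by bisection, the key step --- and the one I expect to require the most care --- is to show that the set of capacities satisfying the boundary condition is an up-set, i.e.\ an interval $[C^{*},\infty)$. For this I would observe that $\lambda_{m}(\mathcal{T})$ in Definition~1 does not depend on $C$, and then prove by induction on $m$, using the recursion of Definition~2, that $\lambda_{m}^{C}(\mathcal{T})$ is non-decreasing in $C$: the inductive step follows because $\min\{\lambda_{m}(\mathcal{T})-\lambda_{m-1}^{C}(\mathcal{T}),\, C(\tau_{L-m+1}-\tau_{L-m})\}$ is non-decreasing in $C$ whenever $\lambda_{m-1}^{C}(\mathcal{T})$ is. Consequently $\mu_{m}^{C}(\mathcal{T})=\sum_{T_{j}\in\mathcal{T}}D_{j}-\lambda_{L-m}^{C}(\mathcal{T})$ is non-increasing in $C$, while $C\cdot\tau_{m}$ is non-decreasing in $C$ (strictly increasing for $m\ge 1$); therefore each of the $L+1$ inequalities defining the boundary condition, once true, remains true as $C$ grows, and bisection is valid.

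It remains to bound the search range and the cost. A trivial lower bound is $C^{*}\ge 1$, and $C^{*}\le kn$ since a feasible schedule always exists on $kn$ machines: assign each task $T_{i}$ a private block of $k_{i}$ machines and run it at full parallelism for $len_{i}=\lceil D_{i}/k_{i}\rceil$ slots, which fits inside $[1,d_{i}]$ because $s_{i}\ge 1$, using $\sum_{T_{i}\in\mathcal{T}}k_{i}\le kn$ machines in all. Bisection on $[1,kn]$ then finds $C^{*}$ in $\mathcal{O}(\log(kn))$ iterations, each of which evaluates the boundary condition in $\mathcal{O}(nL)$ time --- one pass over the tasks implementing the procedure of Definition~1 to obtain $\lambda_{m}(\mathcal{T})$ for all $m\in[L]$, followed by the $\mathcal{O}(L)$ recursion of Definition~2 for $\lambda_{m}^{C}(\mathcal{T})$ and the $L+1$ comparisons. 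Since the paper assumes that $d$ and $k$ are bounded by constants, $L\le d=\mathcal{O}(1)$ and $\log(kn)=\mathcal{O}(\log n)$, so the whole search costs $\mathcal{O}(n\log n)$; adding the final LDF($\mathcal{T}$) call, which is $\mathcal{O}(n^{2})$ by Proposition~\ref{property-task}, yields the claimed $\mathcal{O}(n^{2})$ time complexity. I would note explicitly that both the iteration count and the per-iteration cost lean on these boundedness assumptions; without them the search alone would already be $\mathcal{O}(n^{2}\log n)$.
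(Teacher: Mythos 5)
Your proposal is correct and follows essentially the same route as the paper: binary search for the least $C$ satisfying the boundary condition over $[1,kn]$, with LDF($\mathcal{T}$) certifying feasibility; in fact you supply the monotonicity-in-$C$ argument that the paper silently assumes when invoking binary search. One small repair to your inductive step: the quantity $\min\{\lambda_{m}(\mathcal{T})-\lambda_{m-1}^{C}(\mathcal{T}),\, C(\tau_{L-m+1}-\tau_{L-m})\}$ need not itself be non-decreasing in $C$ (its first argument decreases as $\lambda_{m-1}^{C}$ grows); what you should say is that the whole sum equals $\min\{\lambda_{m}(\mathcal{T}),\, \lambda_{m-1}^{C}(\mathcal{T})+C(\tau_{L-m+1}-\tau_{L-m})\}$, which is a minimum of two non-decreasing functions of $C$ and hence non-decreasing, giving the up-set property you need.
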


\begin{algorithm}
\SetKwInOut{Begin}{Begin}
\SetKwInOut{Output}{Output}
\BlankLine
\BlankLine

$L\leftarrow 1$,  $U\leftarrow k\cdot n$\tcp*{\footnotesize{$L$ and $U$ are respectively the lower and upper bounds of the minimum number of needed machines}}

$mid\leftarrow \frac{L+U}{2}$\;

\While{$U-L\leq 1$}{

   \If{the boundary condition is satisfied with $C=C^{*}$}{
   
       $U \leftarrow mid$\;
   }
   \Else{
   
       $L \leftarrow mid$\;
    
   } 
   
   $mid\leftarrow \frac{L+U}{2}$\;   

}

$C^{*}\leftarrow U$\tcp*{\footnotesize{the optimal number of machines}}

call the algorithm LDF($\mathcal{T}$) to produce a schedule of $\mathcal{T}$ on $C^{*}$ machines\;

\caption{Machine Minimization)\label{minimization}}
\end{algorithm}

\subsection{Minimizing Maximum Weighted Completion Time}

Under the task model of this paper and for the objective of minimizing the maximum weighted completion time of tasks, a direction application of LDF($\mathcal{S}$) improves the algorithm in \cite{Nagarajan} by a factor 2. In \cite{Nagarajan}, with a polynomial time complexity, Nagarajan {\em et al.} find a completion time $d_{i}$ for each task $T_{i}$ that is $1+\epsilon$ times the optimal in terms of the objective here; then they propose a scheduling algorithm where each task can be completed by the time at most 2 times $d_{i}$. As a result, an $(2+2\epsilon)$-approximation algorithm is obtained. Instead, by using the optimal scheduling algorithm LDF($\mathcal{S}$), we have that
\begin{proposition}
There is a ($1+\epsilon$)-approximation algorithm for scheduling independent malleable tasks under the objective of minimizing the maximum weighted completion time of tasks.
\end{proposition}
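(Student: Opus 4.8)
The plan is to keep the two-phase structure of the algorithm of Nagarajan et al.~\cite{Nagarajan} and replace only its second (scheduling) phase by LDF($\mathcal{S}$). Recall that their algorithm first computes, in polynomial time, a target completion time $d_j$ for every task $T_j$ such that (i) the objective evaluated at the profile $\{d_j\}$ is within a factor $1+\epsilon$ of the optimum, and (ii) the profile $\{d_j\}$ is \emph{feasible}, i.e.\ there is a schedule on $C$ machines in which each $T_j$ is fully allocated by time $d_j$. In~\cite{Nagarajan} the second phase is a routine that only guarantees completion of $T_j$ by time $2d_j$, which is exactly where the factor $2$ is lost.

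First I would form the instance $\mathcal{S}=\mathcal{T}$ but with the deadline of each $T_j$ reset to the target $d_j$ produced in the first phase. Since $\{d_j\}$ is feasible, there is a feasible schedule for $\mathcal{S}$ on $C$ machines; by Lemma~\ref{boundary-condition} this means $\mathcal{S}$ satisfies the boundary condition, and then by Theorem~\ref{main-theorem} (equivalently Proposition~\ref{property-task}) LDF($\mathcal{S}$) itself produces a feasible schedule for $\mathcal{S}$. In that schedule every task $T_j$ is fully allocated by its target time $d_j$, so its completion time $t_j$ satisfies $t_j\le d_j$.

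Next I would use the fact that the objective — the maximum weighted completion time $\max_j v_j t_j$ (and likewise the maximum weighted lateness, since the original due dates are fixed) — is nondecreasing in each $t_j$. Hence the value achieved by the LDF schedule is at most the value of the profile $\{d_j\}$, which by property (i) is at most $1+\epsilon$ times the optimal value. For the running time, the first phase is polynomial and LDF($\mathcal{S}$) costs $\mathcal{O}(n^2)$ by Proposition~\ref{property-task}, so the combined algorithm is polynomial; this yields the claimed $(1+\epsilon)$-approximation.

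The main obstacle is entirely in correctly invoking the first phase: one must confirm that the completion-time profile $\{d_j\}$ returned by the procedure of~\cite{Nagarajan} is genuinely feasible on $C$ machines (so that the boundary condition holds for $\mathcal{S}$) and that its objective value is within $1+\epsilon$ of optimal — both are proved in~\cite{Nagarajan}. The only new ingredient here is that, once such a profile is available, the optimality of LDF($\mathcal{S}$) eliminates the need for any approximate scheduling step, so the factor $2$ disappears; everything else is a monotonicity observation and a complexity bookkeeping that follow directly from Theorem~\ref{main-theorem} and Proposition~\ref{property-task}.
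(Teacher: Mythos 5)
Your proposal is correct and follows essentially the same route as the paper: take the feasible target completion-time profile $\{d_j\}$ from the first phase of Nagarajan et al., treat the targets as deadlines, and invoke the optimality of LDF($\mathcal{S}$) (via the boundary condition and Theorem~\ref{main-theorem}) to complete every task by $d_j$ instead of $2d_j$, which removes the factor of $2$. The paper states this more tersely, but the substance — including the reliance on feasibility of the profile and the polynomial running time — is identical to what you wrote.
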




\section{Conclusion}
\label{sec.conclusion}

In this paper, we study the problem of scheduling $n$ deadline-sensitive malleable batch tasks on $C$ identical machines. Our core result is a new theory to give the first optimal scheduling algorithm so that $C$ machines can be optimally utilized by a set of batch tasks. We further derive four algorithmic results in obvious or non-obvious ways: (\rmnum{1}) the best possible greedy algorithm for social welfare maximization with a polynomial time complexity of $\mathcal{O}(n^{2})$ that achieves an approximation ratio of $\frac{s-1}{s}$, (\rmnum{2}) the first dynamic programming algorithm for social welfare maximization with a polynomial time complexity of $\mathcal{O}(\max\{nd^{L}C^{L}, n^{2}\})$, (\rmnum{3}) the first exact algorithm for machine minimization with a polynomial time complexity of $\mathcal{O}(n^{2}, L\cdot n\cdot \log{n})$, and (\rmnum{4}) an improved polynomial time approximation algorithm for the objective of minimizing the maximum weighted completion time of tasks, reducing the previous approximation ratio by a factor 2. Here, $L$ and $d$ are the number of deadlines and the maximum deadline of tasks.

\appendix

\begin{proof}[Proof of Lemma~\ref{exit-condition-1}]
Before executing Fully-Utilize($i$), the resource allocation to the previously allocated tasks $\mathcal{S}^{\prime}$ satisfies Property~\ref{proper-1}. Its execution does not change the previous allocation to $\mathcal{S}^{\prime}$. Let $\mathcal{S}^{\prime\prime}=\mathcal{S}^{\prime}\cup\{T_{i}\}$. Since $d_{i}=\tau_{m}$, the workload of $T_{i}$ can only be processed in $[1, \tau_{m}]$; the maximum workload of $\mathcal{S}^{\prime\prime}$ that could be processed in $[\tau_{m}+1, \tau_{L}]$ still equals its counterpart when $\mathcal{S}^{\prime}$ is considered, i.e., $\lambda_{L-m}^{C}(\mathcal{S}^{\prime\prime})=\lambda_{L-m}^{C}(\mathcal{S}^{\prime})$. Upon completion of Fully-Utilize($i$), if the total allocation to $\mathcal{S}^{\prime\prime}$ in $[1, \tau_{m}]$ is $C\cdot\tau_{m}$, we could conclude that $T_{i}$ is the last task of $\mathcal{S}$ being considered and all tasks in $\mathcal{S}$ have been fully allocated; otherwise, $\mathcal{S}^{\prime\prime}\subsetneq\mathcal{S}$, which contradicts the fact that $\mathcal{S}$ and its subset satisfy the boundary condition, which implies that after the maximum workload of $\mathcal{S}^{\prime\prime}$ has been processed in $[\tau_{m}+1, \tau_{L}]$, the remaining workload $\mu_{m}^{C}(\mathcal{S}^{\prime\prime})\leq \cdot C\cdot\tau_{m}$. Hence, we conclude that
\end{proof}

\begin{proof}[Proof of Lemma~\ref{lemma-fully-utilize}]
During the execution of Fully-Utilize($i$), upon completion of the allocation to $T_{i}$ at $t\in [1, t_{1}]$, if $T_{i}$ has not been fully allocated yet, it is allocated $k_{i}$ machines at this slot.
The allocations to $T_{i}$ at slots $t_{1}, \cdots, 1$ are non-increasing, i.e.,
\begin{center}
$y_{i}(1)\leq y_{i}(2)\leq \cdots\leq y_{i}(t_{1})$.
\end{center}
The reason for this is as follows: Fully-Utilize($i$) allocates machines to $T_{i}$ from $d_{i}$ towards earlier slots and, after the allocation at every slot $t\in [1, t_{1}]$, $y_{i}(t)=\min\{k_{i}, D_{i}-\sum_{\overline{t}=t+1}^{d_{i}}{y_{i}(\overline{t})}\}$ whose value is non-increasing with $t$. With Property~\ref{proper-2}, before executing Fully-Utilize($i$), the numbers of idle machines have a stepped shape, i.e., $\overline{W}(1)\geq \cdots\geq \overline{W}(t_{0})$. The execution of Fully-Utilize($i$) does not change the previous allocation to $\mathcal{S}^{\prime}$ and upon its completion the number of available machines $\overline{W}(t)$ at every slot $t\in [1, \tau_{m}]$ will be no larger than its counterpart before executing Fully-Utilize($i$); we thus have $t_{0}\geq t_{1}$. Upon completion of Fully-Utilize($i$), deducting the machines allocated to $T_{i}$, the numbers of idle machines still have a stepped shape in $[1, t_{1}]$. Hence, the lemma holds.
\end{proof}

\begin{proof}[Proof of Lemma~\ref{lemma-existence}]
Recall that $W(t)$ is the sum of the allocations $y_{j}(t)$ of all tasks $T_{j}\in \mathcal{S}$ at $t$ and $\overline{W}(t) + W(t)=C$. Initially, we have the inequality that $W(t)-y_{i}(t)>W(t^{\prime})-y_{i}(t^{\prime})$ due to the conditions (\rmnum{1})-(\rmnum{3}) of Lemma~\ref{lemma-existence}, and, there exists a $T_{i^{\prime}}$ such that $y_{i^{\prime}}(t^{\prime})<y_{i^{\prime}}(t)$; otherwise, that inequality would not hold. In the subsequent iteration of Routine($\cdot$), $\overline{W}(t)$ becomes $>0$ since partial allocation of $T_{i^{\prime}}$ is transferred from $t$ to $t^{\prime}$; however, it still holds that $\overline{W}(t)<\Delta\leq k_{i}-y_{i}(t)$. So, we have
\begin{center}
$W(t)-y_{i}(t) = C-\overline{W}(t)-y_{i}(t) > W(t^{\prime})-k_{i} = W(t^{\prime})-y_{i}(t^{\prime})$
\end{center}
and such $T_{i^{\prime}}$ can still be found like the initial case.
\end{proof}

\begin{proof}[Proof of Lemma~\ref{lemma-allocate}]
 If $T_{i}$ has been allocated $D_{i}$ resource just upon completion of Fully-Utilize($\cdot$), Fully-Allocate($i$) does nothing upon its completion and we have $t_{2}=t_{1}$ and the lemma holds. Otherwise, within $[1, \tau_{m}]$, by Lemma~\ref{lemma-fully-utilize}, only the time slots $\overline{t}$ in $[1, t_{1}]$ have available machines, i.e., $\overline{W}(\overline{t})>0$, and, at these time slots, $y_{i}(\overline{t})=k_{i}$; for all $\overline{t}\in[t_{1}+1, d_{i}]$, $\overline{W}(\overline{t})=0$. So, only for each $t$ in $[t_{1}+1, d_{i}]$ and from $d_{i}$ towards earlier time slots, Fully-Allocate($i$) will reduce the allocations of the previous tasks of $\mathcal{S}^{\prime}$ at $t$ and transfer them to the latest time slot $t^{\prime}$ in $[1, t_{1}]$ with $\overline{W}(t^{\prime})>0$ (see the step 2 of Fully-Allocate($i$)); then, all the available machines at $t$ will be re-allocated to $T_{i}$ and $\overline{W}(t)$ is still zero again (see the step 3 of Fully-Allocate($i$)), and, the number of available machines at $t^{\prime}$ will be decreased to zero one by one from $t_{1}$ toward earlier time slots. Due to Lemma~\ref{lemma-fully-utilize}, the lemma holds.
\end{proof}

\begin{proof}[Proof of Lemma~\ref{lemma-complexity}]
The time complexity of Allocate-B($i$) depends on Fully-Allocate($i$) or AllocateRLM($\cdot$). In the worst case, Fully-Allocate($i$) and AllocateRLM($\cdot$) have the same time complexity from the execution of Routine($\cdot$) at every time slot $t\in[1, d_{i}]$. In AllocateRLM($\cdot$) for every task $T_{i}\in\mathcal{T}$, each loop iteration at $t\in[1, d_{i}]$ needs to seek the time slot $t^{\prime}$ and the task $T_{i^{\prime}}$ at most $D_{i}$ times. The time complexities of respectively seeking $t^{\prime}$ and $T_{i^{\prime}}$ are $\mathcal{O}(d)$ and $\mathcal{O}(n)$; the maximum of these two complexities is $\max\{d, n\}$. Since $d_{i}\leq d$ and $D_{i}\leq D$, we have that both the time complexity of Allocate-B($i$) is $\mathcal{O}(dD\max\{d, n\})$. Since we assume that $d$ and $k$ are finitely bounded where $D\leq d\cdot k$, we conclude that $\mathcal{O}(dD\max\{d, n\})=\mathcal{O}(n)$.
\end{proof}

\begin{proof}[Proof of Lemma~\ref{lemma-upper-bound-1}]
 Let us consider an optimal allocation to $\mathcal{A}_1, \mathcal{R}_1, \cdots, \mathcal{A}_K, \mathcal{R}_{K}$ for the MSW-\Rmnum{1} problem. If we replace an allocation to a task in $\mathcal{R}_m$ with the same allocation to a task in $\mathcal{R}_{m}^{\prime}$ and do not change the allocation to $\mathcal{A}_m$, this generates a feasible schedule for the MSW-\Rmnum{2} problem, which yields at least the same social welfare since the marginal value of the task in $\mathcal{R}_{m}^{\prime}$ is no smaller than the ones of the tasks in $\mathcal{R}_m$; hence, Lemma~\ref{lemma-upper-bound-1} holds.
\end{proof}


\begin{proof}[Proof of Lemma~\ref{lemma-upper-bound-2}]
We will show in an optimal schedule of the MSW-\Rmnum{2} problem that (\rmnum{1}) only the tasks of $\mathcal{R}_{m}^{\prime}$, $\mathcal{A}_{1}, \cdots, \mathcal{A}_{K}$ will be executed in $[t_{m-1}^{th}+1, t_{m}^{th}]$, and (\rmnum{2}) the upper bound of the maximum workload of $\mathcal{R}_{m}^{\prime}$ that could be processed in $[t_{m-1}^{th}+1, t_{m}^{th}]$ is $(1-r)\cdot(t_{m}^{th}-t_{m-1}^{th})\cdot C$. As a result, the total value generated by executing all tasks of $\mathcal{A}_{1}, \cdots, \mathcal{A}_{K}$ and $(1-r)\cdot(t_{m}^{th}-t_{m-1}^{th})\cdot C$ workload of each $\mathcal{R}_{m}^{\prime}$ ($m\in [K]^{+}$) is an upper bound of the optimal social welfare for the MSW-\Rmnum{2} problem.

We prove the first point by contradiction. Given a $m\in [K]^{+}$, if $m\geq 2$, all tasks of $\mathcal{R}_{1}^{\prime}, \cdots, \mathcal{R}_{m-1}^{\prime}$ could not be processed in $[t_{m-1}^{th}+1, t_{m}^{th}]$ due to the deadline constraint. If $m\leq K-1$, the marginal value of the task in $\mathcal{R}_{m}^{\prime}$ is no smaller than the ones of $\mathcal{R}_{m+1}^{\prime}, \cdots, \mathcal{R}_{K}^{\prime}$; instead of processing $\mathcal{R}_{m+1}^{\prime}, \cdots, \mathcal{R}_{K}^{\prime}$ in $[t_{m-1}^{th}+1, t_{m}^{th}]$, processing $\mathcal{R}_{m}^{\prime}$ could generate at least the same value or even a higher value. Hence, the first point holds.

We prove the second point also by contradiction. If there exists a $m^{\prime}\in [1, K]$ such that more than $(1-r)\cdot(t_{m^{\prime}}^{th}-t_{m^{\prime}-1}^{th})\cdot C$ workload of $\mathcal{R}_{m^{\prime}}^{\prime}$ is processed in $[t_{m^{\prime}-1}^{th}+1, t_{m^{\prime}}^{th}]$, let $m$ denote the minimum such $m^{\prime}$. In the case where $m=1$, due to Features~\ref{fully} and~\ref{utilization1}, after the maximum workload of the tasks of $\mathcal{A}_{1}$ has been processed in $[t_{1}^{th}+1, t_{K}^{th}]$, the minimum remaining workload that could be processed in $[1, t_{1}^{th}]$ is at least $r\cdot t_{1}^{th}\cdot C$. If more than $(1-r)\cdot t_{1}^{th}\cdot C$ workload of $\mathcal{R}_{1}^{\prime}$ is processed in $[1, t_{1}^{th}]$, this means that the total amount of workload of $\mathcal{A}_{1}$ processed in $[1, t_{1}^{th}]$ is smaller than $r\cdot t_{1}^{th} \cdot C$; in this case, we could always remove the allocation to $\mathcal{R}_{1}^{\prime}$ and add more allocation to $\mathcal{A}_{1}$ to increase the total value. As a result, the second point holds when $m=1$. In the other case where $m\geq 2$, since we are seeking for an upper bound, we could assume that for all $l\in [m-1]^{+}$, $(1-r)\cdot(t_{l}^{th}-t_{l-1}^{th})\cdot C$ workload of $\mathcal{R}_{l}^{\prime}$ is processed in $[t_{l-1}^{th}+1, t_{l}^{th}]$. Again due to Features~\ref{fully} and~\ref{utilization1}, similar to the case where $m=1$, the minimum available workload of $\sum_{T_{i}\in \cup_{l=1}^{m}{\mathcal{A}_{l}}}$ that could be processed in $[1, t_{m}^{th}]$ is at least $r\cdot t_{m}^{th} \cdot C$. In this case, we could still remove the allocation to $\mathcal{R}_{m}^{\prime}$ and add more allocation to $\sum_{T_{i}\in \cup_{l=1}^{m}{\mathcal{A}_{l}}}$ to increase the total value, with the second point holding when $m\geq 2$.
\end{proof}

\begin{proof}[Proof of Lemma~\ref{lemma-interval-value}]
It suffices to prove that, the total allocation to $\cup_{l=1}^{m}{\mathcal{A}_{l}}$ in $[1, t_{m}^{th}]$ could be divided into $m$ parts such that, for all $l\in [1, m]$, (\rmnum{1}) the $l$-th part has a size $r\cdot (t_{l}^{th}-t_{l-1}^{th})\cdot C$, and (\rmnum{2}) the allocation of the $l$-th part is associated with marginal values no smaller than $\overline{v}_{l}^{\prime}$. Then, the total value generated by executing the $l$-th part is no smaller than $\frac{1-r}{r}$ times the total value generated by the allocation to $\mathcal{R}_{l}^{\prime}$ in $[t_{l-1}^{th}+1, t_{l}^{th}]$. As a result, the value generated by the total allocation to $\cup_{l=1}^{m}{\mathcal{A}_{l}}$ in $[1, t_{m}^{th}]$ is no smaller than $\frac{1-r}{r}$ times the value generated by the allocation to $T_{1}^{\prime}, \cdots, T_{m}^{\prime}$.

Due to Feature~\ref{utilization1}, the allocation to $\mathcal{A}_{1}$ achieves a utilization $r$ in $[1, t_{1}^{th}]$ and we could use a part of this allocation as the first part whose size is $r\cdot t_{1}^{th}\cdot C$. Next, the allocation to $\mathcal{A}_{1}\cup\mathcal{A}_{2}$ achieves a utilization $r$ in $[1, t_{2}^{th}]$; we could deduct the allocation used for the first part and get a part of the remaining allocation to $\mathcal{A}_{1}\cup\mathcal{A}_{2}$ as the second part, whose size is $r\cdot (t_ {2}^{th}-t_ {1}^{th})\cdot C$. Similarly, we could get the 3rd, $\cdots$, $m$-th parts that satisfy the first point mentioned at the beginning of this proof. Since the marginal value of the task of $\mathcal{R}_{l}^{\prime}$ is no larger than the ones of the tasks in $\cup_{l^{\prime}=1}^{l}{\mathcal{A}_{l^{\prime}}}$ for all $1\leq l\leq m$, the second point mentioned above also holds.
\end{proof}

\begin{proof}[Proof of Proposition~\ref{proposition-utilization-2}]
We first show that the resource utilization of $\mathcal{A}_{1}\cup\cdots\cup\mathcal{A}_{m}$ in $[1, \tau_{m}]$ is $r$ upon completion of the $m$-th phase of GreedyRLM; then, we consider a task $T_{i} \in \cup_{l=1}^{m}\mathcal{R}_{l}$ such that $d_{i} = c_{m}$. Since $T_{i}$ is not accepted when being considered, it means that $ \sum_{t\leq d_{i}}{\min\{k_{i}, \overline{W}(t)\}}<D_{i}$ at that time and there are at most $len_{i}-1=\lceil \frac{d_{i}}{s_{i}}\rceil-1$ time slots $t$ with $\overline{W}(t)\geq k_{i}$ in $[1, c_{m}]$. Then, we assume that the number of the time slots $t$ with $\overline{W}(t)\geq k_{i}$ is $\mu$. Since $T_{i}$ isn't fully allocated, we have the current resource utilization of $\mathcal{A}_{1}\cup\cdots\cup\mathcal{A}_{m^{\prime}}$ in $[1, c_{m}]$ is at least
\begin{align*}
&\frac{C  d_{i}-\mu C - (D_{i}-\mu k_{i})}{C\cdot d_{i}} \geq  \frac{Cd_{i}-D_{i} - (len_{i}-1)(C-k_{i})}{C\cdot d_{i}}\\
\geq  &  \frac{C(d_{i}-len_{i}) + (C-k_{i}) + (len_{i}k_{i}-D_{i})}{C\cdot d_{i}} \geq\frac{s-1}{s}\geq r.
\end{align*}

We assume that $T_{i}\in\mathcal{R}_{m^{\prime}}$ for some $m^{\prime}\in[m]^{+}$. Now, we show that, after $T_{i}$ is considered and rejected, the subsequent resource allocation by Allocate-A($j$) to each task $T_{j}$ of $\cup_{l=m^{\prime}+1}^{L}{\mathcal{A}_{l}}$ doesn't change the utilization in $[1, \tau_{m}]$. Fully-Utilize($j$) does not change the allocation to the previous accepted tasks; the operations of changing the allocation to other tasks in AllocateRLM($j$, 0, $t_{m}^{th}+2$) happen in its call to Routine($\Delta$, 0, 1, $t$) where we have $c_{m^{\prime}}\leq t_{m^{\prime}}^{th}\leq t_{l}^{th}$ for all $m^{\prime}+1\leq l\leq L$. Due to the function of lines 6-8 of Routine($\Delta$, 0, 1, $t$), in the $l$-th phase of GreedyRLM, the call to any Allocate-A($j$) will never change the current allocation of $\mathcal{A}_{1}\cup\cdots\cup\mathcal{A}_{m^{\prime}}$ in $[1, c_{m}]$. Hence, if $t^{th}_m = c_m$, upon completion of GreedyRLM, the resource utilization of $\mathcal{A}_{1}\cup\cdots\cup\mathcal{A}_{m}$ where $m^{\prime}\leq m$; if $t^{th}_{m} > c_{m}$, since each time slot in $[c_m+1, t^{th}_m]$ is fully utilized by the definition of $t^{th}_{m}$, the resource utilization in $[c_{m}+1, t^{th}_m]$ is 1 and the final resource utilization will also be at least $r$.
\end{proof}

\begin{proof}[Proof of Lemma~\ref{complexity-1}]
Recall the process of defining $\mu_{m}^{C}(\mathcal{S})$ where $\mathcal{S}=\mathcal{T}$. In Definition~\ref{Def-1} that defines $\lambda_{m}(\mathcal{T})$, $n$ tasks are considered sequentially for each $m\in [L]^{+}$, leading to a complexity $L\cdot n$. In Definition~\ref{Def-2} that derives $\lambda_{m}^{C}(\mathcal{T})$ from $\lambda_{m}(\mathcal{T})$, $\lambda_{1}^{C}(\mathcal{T})$, $\lambda_{2}^{C}(\mathcal{T})$, $\cdots$, $\lambda_{L}^{C}(\mathcal{T})$ are considered sequentially, leading to a complexity $\mathcal{O}(L)$. Finally, $\mu_{m}^{C}(\mathcal{T})=\sum_{T_{i}\in\mathcal{T}}{D_{i}} - \lambda_{m}^{C}(\mathcal{T})$. Hence, the time complexity of determining the satisfiability of boundary condition depends on Definition~\ref{Def-1} and is $\mathcal{O}(L\cdot n)$.
\end{proof}






\end{document}